\newtheorem{theorem}{Theorem}
\newtheorem{lemma}[theorem]{Lemma}
\newcommand{\Hash}{\mathsf{Hash}}
\newmdtheoremenv{theo}{Theorem}
\begin{document}

\title{Engel $p$-adic Isogeny-based  Cryptography over Laurent Series: Foundations, Security, and an ESP32 Implementation}

\author{Ilias Cherkaoui~and~Indrakshi~Dey,~\IEEEmembership{Senior Member,~IEEE,}
\thanks{I.~Cherkaoui (Email: Ilias.Cherkaoui@WaltonInstitute.ie) and I. Dey (Email: indrakshi.dey@waltoninstitute.ie) are with the Walton Institute for Information and Communications Systems Science, South East Technological University, Waterford, Ireland.}
\thanks{This work is supported in part by HEARG TU RISE Project ``AIQ-Shield" and the HORIZON European Cybersecurity Competence Centre (ECCC) Project ``Q-FENCE" under Grant Number 101225708.}
}


\maketitle

\begin{abstract}
Securing the Internet of Things (IoT) against quantum attacks requires public-key cryptography that (i) remains \emph{compact} and (ii) runs \emph{efficiently on microcontrollers}—capabilities many post-quantum (PQ) schemes lack due to large keys and heavy arithmetic. We address both constraints simultaneously with, to our knowledge, the \textit{first-ever} isogeny framework that encodes super-singular elliptic-curve isogeny data via \textit{novel} Engel expansions over the $p$-adic Laurent series $\mathbb{Q}_p((t))$. Engel coefficients compress torsion information, thereby addressing the compactness constraint, yielding public keys of $\sim$1.1–16.9~kbits preserving the hallmark small sizes of isogeny systems. Engel arithmetic is local and admits fixed-precision $p$-adic operations, enabling micro-controller efficiency with low-memory, branch-regular kernels suitable for embedded targets.

Methodologically, we prove closure and isomorphism properties for Engel arithmetic; derive explicit 2-isogenies via Hensel lifts of torsion coordinates; and build a supersingular-isogeny Diffie–Hellman (SIDH)-style \emph{key encapsulation mechanism (KEM)} that publishes truncated Engel coefficients with $j$-invariants. An ESP32 implementation validates practicality for IoT: encrypt/decrypt latency scales linearly with message size and improves with clock frequency ($\approx$51\% from 80$\to$160~MHz; $\approx$34\% from 160$\to$240~MHz); power traces exhibit operation-specific signatures matching the algorithmic work model. Security is framed by the \emph{Supersingular Isogeny Decisional Problem (SIDP)} and a new \emph{Engel-Expansion Inversion} problem, whose non-periodic $p$-adic structure resists Shor-type quantum attacks. Overall, the proposed Engel/$p$-adic isogeny KEM directly delivers compact, microcontroller-ready PQ security for IoT nodes. \footnote{\copyright 2025 IEEE. Personal use of this material is permitted. Permission from IEEE must be obtained for all other uses, in any current or future media, including reprinting/republishing this material for advertising or promotional purposes, creating new collective works, for resale or redistribution to servers or lists, or reuse of any copyrighted component of this work in other works.}
\end{abstract}

\begin{IEEEkeywords}
Post-quantum cryptography; isogeny-based cryptography; $p$-adic numbers; Engel expansions; Laurent series; key encapsulation mechanism (KEM); IoT; embedded systems; Vélu formulas.
\end{IEEEkeywords}

\section{Introduction}
\IEEEPARstart{T}{he} advent of quantum computing has profoundly transformed the landscape of secure communication and data protection. With Shor’s and Grover’s algorithms threatening the classical hardness assumptions underlying RSA, ElGamal, and elliptic-curve cryptography (ECC), the world faces an urgent need for post-quantum cryptography (PQC), cryptosystems designed to resist attacks from both classical and quantum computers. Among the diverse PQC candidates, {isogeny-based cryptography} stands out as one of the most mathematically sophisticated and promising approaches. Its strength derives from the computational hardness of finding isogenies (algebraic morphisms) between supersingular elliptic curves, a problem for which no efficient quantum algorithm exists.

However, despite their theoretical elegance and compact key sizes, isogeny-based systems are notoriously resource-intensive. The arithmetic operations required—modular multiplications, inversions, and isogeny evaluations over large finite fields—introduce substantial latency and energy overhead. Consequently, the deployment of such systems on constrained Internet of Things (IoT) devices has remained an open challenge. De Feo and Jao’s introduction of the Supersingular Isogeny Diffie–Hellman (SIDH) protocol, later extended to SIKE (Supersingular Isogeny Key Encapsulation) \cite{feo,nis}, demonstrated the practical potential of isogeny-based security. Yet, even with advances by Costello et al. on ARM microcontrollers \cite{cos} and by Koziel et al. on FPGA platforms \cite{koz}, the arithmetic density of these systems continues to limit their applicability to ultra-low-power hardware.

{This paper presents, for the first time, a fundamentally new cryptographic construction that integrates Engel expansions with \(p\)-adic Laurent-series arithmetic to realize an isogeny-based cryptosystem optimized for embedded systems.} This integration bridges two domains, analytic number theory and post-quantum cryptography, that have historically evolved in isolation. Engel expansions, originally developed in Diophantine approximation \cite{sch}, provide a systematic representation of numbers as convergent reciprocal sums of monotonically increasing integer sequences. When transplanted into \(p\)-adic number systems and formal Laurent-series fields, these expansions yield compact, deterministic, and algebraically regular arithmetic forms that can encode elliptic-curve parameters and torsion structures with minimal redundancy.  

The resulting {Engel–Laurent framework} introduces a structural regularity that allows predictable and constant-time arithmetic, bounded memory usage, and stable convergence in isogeny evaluations. This property is particularly beneficial for embedded cryptographic devices, where side-channel resistance, energy efficiency, and timing determinism are paramount. By embedding the Engel expansion into the \(p\)-adic topology, this work provides a new parameterization method that stabilizes arithmetic flows and compresses data representation in supersingular isogeny graphs. 

The original Supersingular Isogeny Diffie--Hellman (SIDH) protocol and its KEM variant SIKE were recently shown to be vulnerable to a powerful key-recovery attack by Castryck and Decru~\cite{SIKEp}. The attack exploits the publication of auxiliary torsion images together with a specific multi-prime isogeny structure over the finite field $\mathbb{F}_{p^{2}}$. Our construction borrows only the high-level ``SIDH-style'' KEM interface (a shared $j$-invariant), but differs substantially at the algebraic level: it is defined over the characteristic-zero field $\mathbb{Q}_p((t))$, does not publish images of independent torsion bases under the secret isogeny, and relies on Engel-encoded $p$-adic Laurent series rather than affine torsion points. Consequently, the current SIDH and SIKE attacks do not transfer directly to our scheme. Nevertheless, as with any new isogeny-based primitive, our security claims are conditional, and a dedicated cryptanalysis of this $p$-adic Engel setting remains an important direction for future work.

The proposed system is implemented on the ESP32 microcontroller — a dual-core, WiFi-enabled embedded platform widely adopted in IoT deployments. The experiments demonstrate a full realization of post-quantum encryption, decryption, and key generation within the strict limits of embedded memory and processing capacity. Results reveal stable execution latency, reduced power consumption, and low timing variance, thereby validating that Engel–Laurent arithmetic not only enhances theoretical security but also maps effectively onto physical constraints.

\subsection{Contribution}
The primary contributions of this work are multifold and define the first-ever integration of Engel expansions with isogeny-based cryptography in a \(p\)-adic Laurent-series framework:
\begin{itemize}
    \item We introduce a novel algebraic representation that combines Engel expansions with \(p\)-adic Laurent-series, providing a compact and deterministic encoding of supersingular elliptic-curve parameters for isogeny-based cryptography.
    \item We demonstrate the first implementation of an Engel–Laurent parameterized isogeny-based cryptosystem on a constrained hardware platform (ESP32), achieving efficient key generation, encryption, and decryption in real time under limited computational and energy budgets.
    \item We conduct detailed quantitative analyses of computational latency, energy consumption, and power trace behavior, providing a direct physical interpretation of the underlying arithmetic operations.
    \item We show that the Engel–Laurent arithmetic naturally supports constant-time execution, thereby mitigating side-channel risks and ensuring operational determinism at the microcontroller level.
\end{itemize}

By merging theoretical constructs from number theory and practical concerns from embedded systems design, this work establishes a new paradigm for lightweight post-quantum cryptography that is simultaneously secure, compact, and physically realizable.

The remainder of this paper is structured as follows. Section~\ref{Sec: 2_system_model_problem} introduces the system model and problem formulation for the embedded cryptographic implementation. Section~\ref{Sec: 3_Auction_based_allocation} presents the mathematical foundations of the proposed Engel–Laurent representation and its relation to isogeny-based arithmetic. Section~\ref{Sec: 5_proposed_approach_mSAA} details the embedded implementation on the ESP32 platform, including architectural optimization and algorithmic scheduling. Section~\ref{Sec: 6_results_discussion} provides a comprehensive presentation of the numerical results and physical interpretation of the findings. Finally, Section~\ref{Sec: 7_conclusion} concludes the paper with key insights and perspectives for future research.  

\subsection{Related Works}

The foundations of isogeny-based cryptography were established by De Feo and Jao through the SIDH protocol, later expanded into SIKE \cite{feo,nis}. These systems introduced compact, post-quantum-secure key exchange mechanisms built upon the hardness of computing isogenies between supersingular elliptic curves. Subsequent works by Costello et al. \cite{cos} and Koziel et al. \cite{koz} focused on optimizing isogeny evaluation and field arithmetic for embedded processors and FPGA architectures. Despite these advances, performance and power constraints remain key obstacles to their practical use in IoT networks.

Parallel to these developments, number theorists have studied Engel expansions and their \(p\)-adic analogues extensively for over a century. Schuske and Koblitz \cite{sch,kob} explored their structural and arithmetic properties, while Kedlaya \cite{ked} demonstrated their potential applications in formal power series and coding theory. However, their incorporation into cryptographic arithmetic, especially as a means to encode elliptic-curve parameters, has never been attempted before. The present work thus constitutes the first instance where Engel expansions and \(p\)-adic Laurent-series are operationally merged with isogeny-based cryptography.

Lightweight cryptography for IoT platforms has generally relied on symmetric systems and ECC optimizations, such as TinyECC \cite{liu}, which focuses on improving classical ECC performance in constrained environments. Lattice-based post-quantum systems have also been explored in this context, such as the work of Oder et al. \cite{od}, yet they often incur substantial increases in code size, memory footprint, and power demand. Isogeny-based cryptography, by contrast, offers compact key structures that naturally suit communication-constrained IoT systems, but it demands efficient arithmetic formulations, a challenge addressed in this paper through Engel–Laurent representations.  

In summary, previous studies have either enhanced classical ECC for embedded systems or sought to miniaturize lattice-based PQC. None have reformulated the arithmetic foundations of isogeny-based systems for embedded operation. The present work therefore marks the first convergence of Engel expansions, \(p\)-adic arithmetic, and isogeny-based cryptography within an experimentally validated embedded implementation, a novel path bridging pure number theory and practical post-quantum engineering.

\section{Field Construction}\label{Sec: 2_system_model_problem}

We build the $p$-adic/Laurent-series setting and the Engel-expansion machinery that underpins our isogeny construction. To keep the flow explanatory and actionable, all formal proofs are provided in the Appendices with cross-references here. 

\subsection{Mathematical Preliminaries}
\subsubsection{Preliminaries on $p$-adics and Valuation}\label{subsec:padics}
Let $p$ be prime. The field $\mathbb{Q}_p$ is the completion of $\mathbb{Q}$ under the $p$-adic absolute value $|\cdot|_p$. Any $x\in \mathbb{Q}_p$ admits an expansion
\begin{align}
x=\sum_{n=-k}^{\infty} a_n p^n,\qquad a_n\in\{0,1,\dots,p-1\},\ k\in \mathbb{Z},
\end{align}
with $p$-adic valuation $v_p(x)$ defined as the smallest integer $n$ for which $a_n\neq 0$; set $v_p(0)=+\infty$. The absolute value is $|x|_p=p^{-v_p(x)}$. In code, $v_p$ measures the number of trailing $p$-adic “zeros”—directly informing fixed-precision limb allocation and error budgeting; larger $v_p$ means smaller magnitude in the $p$-adic sense, enabling early truncation without destabilizing arithmetic.

\subsubsection{Laurent Series over $\mathbb{Q}_p$}\label{subsec:laurent}
We work over the Laurent-series field
\begin{align}
\mathbb{Q}_p((t))\triangleq \left\{ f(t)=\sum_{n=-k}^{\infty} a_n t^n  \Big|\ a_n\in\mathbb{Q}_p,\ k\in \mathbb{N} \right\}.
\end{align}
The topology combines the $p$-adic valuation on coefficients with the $t$-adic order on exponents. We use coefficients in $\mathbb{Z}_p[[t]]$ when enforcing integrality and convergence properties. The $t$-variable tracks formal series structure (e.g., parameter drift or lifted roots), while $p$ controls arithmetic stability. On micro-controllers, we represent $f(t)$ by a bounded window of $t$-exponents and a bounded number of $p$-adic limbs per coefficient.

\subsubsection{Engel Expansions on $\mathbb{Q}_p((t))$}\label{subsec:engel}
For $f(t)\in\mathbb{Q}_p((t))$, an \emph{Engel expansion} is a convergent series
\begin{align}\label{eq:engel-def}
f(t)&=\sum_{n=1}^\infty {1}/{a_1(t)a_2(t)\cdots a_n(t)}, \nonumber\\
& a_i(t)\in \mathbb{Z}_p[[t]],\ a_i(0)\equiv 1\pmod p,
\end{align}
constructed greedily by
\begin{align}\label{eq:engel-greedy}
a_1(t)=\big\lfloor 1/f(t)\big\rfloor_{p};  
a_{k+1}(t)=\left\lfloor {1}/{\,f(t)-\sum_{i=1}^k \frac{1}{\prod_{j=1}^i a_j(t)}\,}\right\rfloor_{p},
\end{align}
where $\lfloor\cdot\rfloor_{p}$ extracts the $p$-adic leading term. Convergence follows from
\begin{align}\label{eq:engel-vp-growth}
v_p\!\left({1}/{\prod_{j=1}^n a_j(t)}\right)\ \ge\ n~\rightarrow~\lim_{n\to\infty}v_p\!\left({1}/{\prod_{j=1}^n a_j(t)}\right)=+\infty,
\end{align}
and uniqueness is enforced by the constraints $a_i(t)\in \mathbb{Z}_p[[t]]$ and $a_i(0)\equiv 1\pmod p$ (see Appendix A). Engel coefficients act like a {compressed, stable} coordinate system for $f(t)$: each new factor increases $p$-adic valuation by at least $1$, guaranteeing rapidly diminishing residual terms. Practically, storing a short prefix of $a_i(t)$ captures $f(t)$ to the needed precision with predictable truncation error.

\subsection{Engel Field Isomorphism and 2‑Isogeny via V\'elu}
\subsubsection{The Engel--Field Correspondence}\label{subsec:engel-field}
Let $\mathcal{E}$ be the set of series in the form \eqref{eq:engel-def} produced by \eqref{eq:engel-greedy}.

\begin{theorem}[Engel--Field Isomorphism]\label{thm:engel-isom}
The map $\phi:\mathbb{Q}_p((t))\to\mathcal{E}$ sending $f(t)$ to its Engel expansion is bijective and preserves addition, multiplication, and inversion (on $\mathbb{Q}_p((t))^\times$). Consequently $(\mathcal{E},+,\cdot)$ is a field isomorphic to $\mathbb{Q}_p((t))$.
\end{theorem}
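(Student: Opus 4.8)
\emph{Proof plan.} My approach is to realize $\phi$ as the inverse of the ``summation'' map on Engel data and then transport the field structure. The first step is to fix the topology on $\mathbb{Q}_p((t))$: a sequence of terms tends to $0$ precisely when the combined $t$-adic order and $p$-adic valuation of the terms tend to $+\infty$, and $\mathbb{Q}_p((t))$ (together with the subring $\mathbb{Z}_p[[t]]$) is complete for this topology. Granting this, I define $\Sigma:\mathcal{E}\to\mathbb{Q}_p((t))$ on a sequence $(a_i(t))_{i\ge1}$ by $\Sigma\big((a_i)\big)=\sum_{n\ge1}1\big/\prod_{j=1}^n a_j(t)$; the valuation bound \eqref{eq:engel-vp-growth} shows the $n$-th term has $p$-adic valuation $\ge n$, so the series is Cauchy and $\Sigma$ is well defined.

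Next I would verify that $\phi$ is well defined with image in $\mathcal{E}$. Running the greedy recursion \eqref{eq:engel-greedy} on an arbitrary $f(t)$, an induction shows that each extracted leading term $a_k(t)$ lies in $\mathbb{Z}_p[[t]]$ with $a_k(0)\equiv1\pmod p$, and that the residual $r_k(t)=f(t)-\sum_{i=1}^{k}1\big/\prod_{j\le i}a_j(t)$ satisfies $v_p(r_k)\ge k$; hence $r_k\to0$, and therefore $\Sigma\big(\phi(f)\big)=f$. This induction on the integrality and reduction constraints is the step I expect to demand the most care, since it is exactly what confines the output of \eqref{eq:engel-greedy} to $\mathcal{E}$ rather than to a larger space of reciprocal-product series; here I would lean on the detailed verification earmarked for Appendix~A.

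From $\Sigma\circ\phi=\mathrm{id}$ injectivity of $\phi$ is immediate, since $\phi(f)=\phi(g)$ forces $f=\Sigma(\phi(f))=\Sigma(\phi(g))=g$, while surjectivity onto $\mathcal{E}$ holds by the definition of $\mathcal{E}$ as the set of greedy outputs. The uniqueness result of Appendix~A (the constraints $a_i\in\mathbb{Z}_p[[t]]$, $a_i(0)\equiv1\pmod p$ determine the expansion of its sum) then identifies $\mathcal{E}$ intrinsically as the set of \emph{all} convergent Engel-form series satisfying those constraints, and yields $\phi\circ\Sigma=\mathrm{id}_{\mathcal{E}}$, so that $\phi$ and $\Sigma$ are mutually inverse bijections.

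Finally, I would equip $\mathcal{E}$ with the addition, multiplication, and (on nonzero elements) inversion transported along $\Sigma$ — equivalently, the operations ``sum the two Engel series in $\mathbb{Q}_p((t))$, then re-expand greedily.'' With this structure $\phi$ is by construction a ring isomorphism carrying $\mathbb{Q}_p((t))^\times$ onto $\mathcal{E}\setminus\{\phi(0)\}$, whence $(\mathcal{E},+,\cdot)$ is a field isomorphic to $\mathbb{Q}_p((t))$. The mutual-inverse identities $\Sigma\circ\phi=\mathrm{id}$ and $\phi\circ\Sigma=\mathrm{id}$ are what make ``$\phi$ preserves $+$, $\cdot$, and inversion'' a genuine assertion — the transported operations coincide with the natural expand-the-sum operations on Engel data — rather than a vacuous definition.
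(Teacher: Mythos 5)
Your proposal is correct in substance and rests on the same two pillars as the paper's Appendix A--B argument (existence of the greedy expansion with residual valuations $v_p(R_k)$ strictly increasing, and uniqueness under the constraints $a_i\in\mathbb{Z}_p[[t]]$, $a_i(0)\equiv 1\pmod p$), but it handles the ``preserves operations'' half by a genuinely different route. You make the inverse map $\Sigma$ explicit, prove $\Sigma\circ\phi=\mathrm{id}$ and $\phi\circ\Sigma=\mathrm{id}_{\mathcal{E}}$, and then \emph{define} the operations on $\mathcal{E}$ by transport of structure (``sum, operate in $\mathbb{Q}_p((t))$, re-expand greedily''), so that $\phi$ is an isomorphism by construction; you are right to flag, and then defuse, the tautology worry. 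The paper instead works intrinsically in Engel space: it forms the termwise Cauchy product $w_k=\sum_{i+j=k}u_iv_j$ of two Engel series and uses the non-Archimedean bound $v_p(w_k)\ge k$ (and, for inversion, the geometric series $f^{-1}=u_1^{-1}\sum_{n\ge0}(-T)^n$ with $v_p(T)\ge 1$) to show the result is again a convergent series with Engel-type valuation growth, which a greedy recompression — pinned down by the uniqueness lemma — turns into the Engel coefficients of $f+g$, $fg$, $f^{-1}$. What the paper's version buys is exactly the computational claim made in the main text (one can add, multiply, and invert directly on Engel data with controlled valuations, the ``compute in Engel space'' lever); what your version buys is a cleaner and more clearly non-circular bijectivity argument via a two-sided inverse, at the cost of not exhibiting any in-Engel-space arithmetic beyond ``decode, compute, re-encode.'' Note also that the paper's final identification step likewise appeals to uniqueness of the expansion, so your transported operations and its recompressed ones coincide; the difference is one of constructive content, not of validity. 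Both treatments share the same unproved burden you correctly earmark for Appendix A, namely that the greedy rule really keeps every $a_k$ in $\mathbb{Z}_p[[t]]$ with $a_k(0)\equiv1\pmod p$ for \emph{every} input $f$.
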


Engel arithmetic is \emph{structure-preserving}: one can compute in Engel space (add/multiply/invert) or in $\mathbb{Q}_p((t))$ and obtain consistent answers. This underlies our ``encode once, compute locally” approach in the KEM. For Proof, see Appendix B.

\subsubsection{Elliptic Curves over $\mathbb{Q}_p((t))$ and Reduction}\label{subsec:elliptic}
Consider the short Weierstrass model
\begin{align}\label{eq:E-base}
E:\ y^2=x^3+A(t)x+B(t),~A(t),B(t)\in \mathbb{Q}_p((t)),
\end{align}
with nonzero discriminant $\Delta(t)=-16(4A(t)^3+27B(t)^2)\neq 0$. If $A(t),B(t)\in Z_p[[t]]$, reduction modulo $p$ yields
\begin{align}
\widetilde{E}/\mathbb{F}_p((t)):\ y^2=x^3+\widetilde{A}(t)x+\widetilde{B}(t),
\end{align}
and for super-singular $\widetilde{E}$, $\mathcal{E}_{\mathbb{F}_p((t))}(\widetilde{E})\otimes\mathbb{Q}$ is a quaternion algebra. Working integrally ensures good reduction and enables us to port structural properties (e.g., super-singularity) across isogenies and reductions—key for security modeling and parameter selection.

\subsubsection{Hensel Lifts of $2$-Torsion and Engel Encoding}\label{subsec:hensel}
Fix $p>3$ and the curve
\begin{align}\label{eq:base-curve}
E:\ y^2=x^3+1+pt\quad \text{over }\mathbb{Q}_p((t)).
\end{align}
Setting $y=0$ requires $x^3=-(1+pt)$. If $p\equiv 1\pmod 3$, a primitive cube root $\zeta_3\in\mathbb{F}_p^\times$ exists. Let $f(X)=X^3+(1+pt)\in Z_p[[t]][X]$ and $\alpha_0=\zeta_3$; then $f(\alpha_0)\equiv 0\pmod p$ and $f'(\alpha_0)=3\zeta_3^2\not\equiv 0\pmod p$. By Hensel’s lemma, there is a unique lift $\alpha=x(t)\in\mathbb{Q}_p((t))$ with
\begin{align}\label{eq:xt-series}
x(t)&=\zeta_3(1+pt)^{1/3}=\zeta_3\Big(1+\tfrac{pt}{3}-\tfrac{(pt)^2}{9}+\cdots\Big),\nonumber\\
&v_p\!\left(\frac{(pt)^n}{3^n}\right)\ge n.
\end{align}
Then $P=(x(t),0)$ is a $2$-torsion point and $x(t)$ admits a compact Engel encoding. Hensel lifting provides a {deterministic, convergent} way to compute torsion coordinates. Encoding $x(t)$ via Engel coefficients compresses this root and makes subsequent isogeny evaluations memory-light.

\subsubsection{Explicit $2$-Isogeny via V\'{e}lu}\label{subsec:velu}
Let $G=\langle P\rangle$ with $P=(\alpha,0)$, $\alpha=x(t)$. V\'{e}lu’s formula specializes to
\begin{align}\label{eq:velu-map}
\phi(x,y) &=\left(\,x+\frac{c}{x-\alpha},~y\left(1-\frac{c}{(x-\alpha)^2}\right)\right),\nonumber\\ 
c &=3\alpha^2+A(t)
\end{align}
with codomain $E'=E/G$ in short Weierstrass form $y^2=x^3+A'x+B'$, where $A'=A-5c$, $B'=B-7c\,\alpha$. For \eqref{eq:base-curve}, $A(t)=0$, $B(t)=1+pt$, whence $c=3x(t)^2$ and $A'=-15\,x(t)^2$; $B'=(1+pt)-21\,x(t)^3$. Using $x(t)^3=-(1+pt)$ we obtain
\begin{align}\label{eq:Eprime}
E':\ y^2=x^3-15\,x(t)^2\,x+\big(22+22\,pt\big).
\end{align}
\noindent The map depends only on $\alpha=x(t)$ and $c=3\alpha^2$—both Engel-encodable. Computation mainly uses differences $(x-\alpha)$, which are numerically stable in the $p$-adic sense and translate to branch-regular kernels.

\subsubsection{Reduction and Super-singularity Preservation}\label{subsec:reduction}
Reducing \eqref{eq:Eprime} modulo $p$ gives
\begin{align}
\widetilde{E}':\ y^2=x^3-15\,\widetilde{x(t)}^2\,x+22\quad\text{over }\mathbb{F}_p((t)).
\end{align}
Under the standard hypotheses (char$(\mathbb{F}_p)\neq 2,3$), degree-$2$ isogenies induce isomorphisms on $\mathbb{Q}$-endomorphism algebras; thus supersingularity is preserved (see Appendix).

\begin{theorem}[Supersingularity via $2$-Isogeny]\label{thm:SS-preserve}
Let $E/\mathbb{Q}_p((t))$ be as in \eqref{eq:base-curve}, $P=(x(t),0)$, and $\phi:E\to E'=E/\langle P\rangle$ the isogeny \eqref{eq:velu-map}. Then the induced map
\[
\phi_*:\mathcal{E}(E)\otimes\mathbb{Q} \longrightarrow \mathcal{E}(E')\otimes\mathbb{Q},\qquad \psi\mapsto \phi\circ\psi\circ\hat{\phi},
\]
is an isomorphism. Consequently, if $\widetilde{E}$ is supersingular over $\mathbb{F}_p((t))$, then so is $\widetilde{E}'$. 
\end{theorem}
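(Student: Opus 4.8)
The plan is to split the statement into its two halves---that $\phi_{*}$ is an isomorphism of endomorphism algebras, and that supersingularity transfers from $\widetilde E$ to $\widetilde E'$---and to treat the first half by the formal calculus of a $2$-isogeny and its dual. Recall that $\hat\phi\colon E'\to E$ satisfies $\hat\phi\circ\phi=[2]_{E}$ and $\phi\circ\hat\phi=[2]_{E'}$. The assignment $\psi\mapsto\phi\circ\psi\circ\hat\phi$ maps $\mathcal{E}(E)$ into $\mathcal{E}(E')$ and is additive, but it is not literally multiplicative: since $[2]_{E}$ is central, $(\phi\circ\psi_{1}\circ\hat\phi)\circ(\phi\circ\psi_{2}\circ\hat\phi)=\phi\circ\big([2]_{E}\circ\psi_{1}\circ\psi_{2}\big)\circ\hat\phi=2\,(\phi\circ\psi_{1}\circ\psi_{2}\circ\hat\phi)$. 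This is precisely why one must pass to $\otimes\mathbb{Q}$ and renormalize.

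Accordingly, define $\Phi=\tfrac12\,\phi_{*}$ on $\mathcal{E}(E)\otimes\mathbb{Q}$ and $\Psi=\tfrac12\,\hat\phi_{*}$ on $\mathcal{E}(E')\otimes\mathbb{Q}$, where $\hat\phi_{*}(\chi)=\hat\phi\circ\chi\circ\phi$ (using $\widehat{\hat\phi}=\phi$). Then $\Phi$ is a unital $\mathbb{Q}$-algebra homomorphism: additivity is immediate, multiplicativity is the displayed identity divided by $4$, and $\Phi([1]_{E})=\tfrac12\,\phi\circ\hat\phi=\tfrac12\,[2]_{E'}=[1]_{E'}$. Moreover $\Psi$ is a two-sided inverse, since $\Psi\circ\Phi(\psi)=\tfrac14\,\hat\phi\circ\phi\circ\psi\circ\hat\phi\circ\phi=\tfrac14\,[2]_{E}\circ\psi\circ[2]_{E}=\psi$ and, symmetrically, $\Phi\circ\Psi=\mathrm{id}$ via $\phi\circ\hat\phi=[2]_{E'}$. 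Hence $\Phi$ is an isomorphism of $\mathbb{Q}$-algebras; as $\phi_{*}=2\,\Phi$ and $2$ is invertible, $\phi_{*}$ is itself a $\mathbb{Q}$-linear bijection that transports the algebra structure, and in particular $\mathcal{E}(E)\otimes\mathbb{Q}\cong\mathcal{E}(E')\otimes\mathbb{Q}$. Nothing here touches the $p$-adic or Laurent-series structure; it is the classical fact that isogenous elliptic curves have isomorphic endomorphism algebras, written out via the dual.

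For the corollary I would descend to the special fibre. For \eqref{eq:base-curve} one has $A(t)=0$ and $B(t)=1+pt$, so $\Delta(t)=-432\,(1+pt)^{2}$, a unit in $\mathbb{Z}_{p}[[t]]$ because $p>3$; hence $E$ has good reduction $\widetilde E/\mathbb{F}_{p}((t))$. The kernel $\langle P\rangle$ has order $2$ and $2$ is invertible modulo $p$, so it is \'etale, and its reduced generator $(\widetilde{x(t)},0)$ is a genuine $2$-torsion point of $\widetilde E$; thus $\langle\widetilde P\rangle$ has order $2$. Therefore $\phi$ reduces to a separable $2$-isogeny $\widetilde\phi\colon\widetilde E\to\widetilde E'$, with its dual reducing to $\widehat{\widetilde\phi}$, and likewise after base change to $\overline{\mathbb{F}_{p}((t))}$ (equivalently to $\overline{\mathbb{F}_{p}}$, where supersingularity is tested). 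Applying the first half to $\widetilde\phi$ yields $\mathcal{E}(\widetilde E)\otimes\mathbb{Q}\cong\mathcal{E}(\widetilde E')\otimes\mathbb{Q}$. Since in characteristic $p$ an elliptic curve is supersingular precisely when its endomorphism algebra is the quaternion algebra ramified at $p$ and $\infty$ (and ordinary when it is an imaginary quadratic field), and that dichotomy is an invariant of the $\mathbb{Q}$-algebra isomorphism type, supersingularity of $\widetilde E$ forces supersingularity of $\widetilde E'$.

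The proof is essentially bookkeeping; the two points demanding care are the normalization---one must work in $\mathcal{E}(\cdot)\otimes\mathbb{Q}$ before $\tfrac12\hat\phi$ exists, which is why the theorem is stated for endomorphism \emph{algebras} and why the bare $\phi_{*}$ is a ring isomorphism only up to the central factor $2$---and the compatibility with reduction, namely that the characteristic-$0$ data over $\mathbb{Q}_{p}((t))$ (curve, kernel, isogeny, dual) all reduce coherently so that $\Phi$ persists on the special fibre and after extending scalars to $\overline{\mathbb{F}_{p}}$. I expect no genuine obstacle beyond these: even the degenerate possibility that $\langle\widetilde P\rangle$ collapses (it does not, as just noted) would leave the endomorphism algebra untouched, so the supersingularity conclusion is robust in any case.
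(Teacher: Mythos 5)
Your argument is correct and follows the same core route as the paper's Appendix~C proof --- conjugation $\psi\mapsto\phi\circ\psi\circ\hat\phi$ together with $\hat\phi\circ\phi=[2]_E$, $\phi\circ\hat\phi=[2]_{E'}$, and inversion of the degree after tensoring with $\mathbb{Q}$ --- but you tighten two points the paper leaves implicit. First, you note that the bare $\phi_*$ is additive yet multiplicative only up to the central factor $2$, and you repair this by passing to $\Phi=\tfrac12\phi_*$ with explicit two-sided inverse $\tfrac12\hat\phi_*$; the paper's phrase ``tensoring with $\mathbb{Q}$ inverts the degree'' is exactly this normalization, asserted without the computation, and your reading that $\phi_*$ itself is only a $\mathbb{Q}$-linear bijection transporting the algebra structure is the accurate one. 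Second, and more substantively, the paper proves the isomorphism over $\mathbb{Q}_p((t))$ and then simply states that it ``transports'' the quaternionic structure to $\widetilde E'$; since supersingularity is a property of the special fibre, one really needs the descent step you supply --- good reduction via the unit discriminant $-432(1+pt)^2$ for $p>3$, \'etaleness of the order-$2$ kernel because $2$ is invertible mod $p$, reduction of $\phi$ to a separable $2$-isogeny $\widetilde\phi$, and then the same endomorphism-algebra argument applied to $\widetilde\phi$ combined with the quaternion versus imaginary-quadratic dichotomy. So yours is not a different method but a more complete execution of the paper's, and it closes the reduction-compatibility gap in the published argument; the only residual caveats (the precise reduction of $x(t)$ and what supersingularity over $\mathbb{F}_p((t))$ means when $\zeta_3\notin\mathbb{F}_p$) are inherited from the paper's own hypotheses rather than introduced by your proof.
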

\noindent Security-wise, the endomorphism algebra (quaternion) structure that underlies isogeny hardness is preserved along our map, validating that Engel-encoded evaluations do not weaken the super-singular setting. For Proof, see Appendix B. The constructions above yield two actionable levers:
\begin{enumerate}
\item \textbf{Compact encoding:} Engel series of $x(t)$ and $c=3x(t)^2$ provide short, high-valuation descriptors of torsion data, minimizing public key material.
\item \textbf{Efficient evaluation:} The rational form \eqref{eq:velu-map} requires only differences and a handful of multiplications/divisions in fixed $p$-adic precision, enabling constant-memory, branch-regular kernels on ESP32-class devices.
\end{enumerate}
These properties directly address the compactness and microcontroller-efficiency challenges laid out in the Introduction. All proofs referenced in this section are presented in Appendices A-C (supplement file).

\section{Algorithm Design}\label{Sec: 3_Auction_based_allocation}

This section specifies an Engel/$p$-adic, supersingular-isogeny encryption scheme following the \emph{key encapsulation mechanism (KEM)} pattern used in isogeny-based designs (like SIKE). 

\subsection{Design Goals, Setting, and Assumptions}\label{subsec:goals}

Our overarching goals are centered on developing a cryptographic scheme that is both efficient for embedded systems and resilient against quantum threats. To achieve the objective of \textbf{compactness}, the design minimizes the required public data by ingeniously encoding torsion point coordinates via \textbf{Engel expansions} over the field of formal Laurent series with $p$-adic coefficients, denoted as $\mathbb{Q}_p((t))$. This specialized encoding significantly reduces the size of the public parameters. For optimized \textbf{embedded efficiency}, the scheme is meticulously structured to rely exclusively on \textbf{fixed-precision $p$-adic arithmetic} implemented with highly structured, \textbf{branch-regular kernels}. This fixed-precision methodology is vital for practical hardware implementation on resource-constrained devices, as it ensures predictable execution timing, thereby mitigating the risk of timing-based side-channel attacks. Finally, Post-quantum resilience in our construction is modeled via two conjecturally hard problems, namely the SIDP and EEIP. These assumptions are intended to provide a conservative security foundation against quantum adversaries, subject to future cryptanalytic scrutiny of the proposed $p$-adic Engel isogeny framework.

\subsubsection{Field/curve setting} 

Let $p>3$ be prime (we later consider $p\equiv 2\pmod 3$ to keep reduction super-singular while allowing Hensel lifts over $\mathbb{Q}_p((t))$). Let us Work over $E/\mathbb{Q}_p((t)): y^2=x^3+1+pt$, and fix the $2$-torsion point $P=(x_P(t),0)$ with,
\begin{align}
x_P(t)=\zeta_3(1+pt)^{1/3}\in\mathbb{Q}_p((t))\qquad\text{(Hensel lift)}, 
\end{align}
where $\zeta_3$ is a primitive cube root of unity in an algebraic extension (its reduction may or may not lie in $\mathbb{F}_p$, cf. Sec.~II). Denoted by $\phi:E\to E'$, $\deg\phi=2$, the V\'{e}lu isogeny with kernel $\langle P\rangle$ as in \eqref{eq:velu-map}--(12) (Sec.~II). We encode $x_P(t)$ and derived quantities via truncated Engel series with coefficients in $Z_p[[t]]$. The single $2$-isogeny primitive is deliberately small and uniform. All public data is ultimately functions of a few $p$-adic series (Engel coefficients), which are cheap to store and manipulate on ESP32-class devices.

\subsubsection{High-Level KEM Pattern}\label{subsec:pattern}
We adopt a SIDH/SIKE-style KEM interface: a) \textsf{KeyGen}$\to(\textsf{pk},\textsf{sk})$, b) \textsf{Encap}$(\textsf{pk})\to (c,K)$, and c) \textsf{Decap}$(\textsf{sk},c)\to K$. Correctness is via a shared isogeny descendant $E^{(n+r)}$ and $K=\#(j(E^{(n+r)}))$; see Appendix D for the formal proof. The shared key is derived from a curve identifier ($j$-invariant), avoiding large payloads. Hashing $j(\cdot)$ aligns with standard isogeny KEM design and simplifies constant-time coding.

\subsubsection{Key Material and Public Encodings}\label{subsec:keymaterial}
\textbf{Private Key} - Choose $n\in\{1,2\}$ indicating the number of chained $2$-isogenies along the kernel $\langle P\rangle$: $\phi_n \;=\; \underbrace{\phi\circ\phi\circ\cdots\circ \phi}_{n\ \text{times}},\qquad E^{(n)} \;=\; E/\langle P\rangle^{n}.$
\textbf{Public Key} - Publish $\textsf{pk}=\Big(\{a_i^{(n)}(t)\}_{i=1}^{M},\; j(E^{(n)})\Big),$ where $\{a_i^{(n)}(t)\}$ are the Engel coefficients (to depth $M$ and coefficient depth $d$) encoding the torsion coordinate(s) and the induced V\'{e}lu outputs that define $E^{(n)}$. The secret is just a tiny counter ($n$); all heavy data are compressed into Engel coefficients plus a single scalar $j$-value. This minimizes storage and I/O.

\subsection{KEM Algorithms}\label{subsec:algorithms}

The following subsections detail the three essential algorithms — Key Generation, Encapsulation, and Decapsulation - that constitute the proposed Key Encapsulation Mechanism (KEM) based on isogenies and Engel expansions.

\subsubsection{Key Generation}\label{subsubsec:keygen}
The key generation process begins by sampling the \textbf{private key}, $n$, from the set $\{1, 2\}$, which dictates the number of chained $2$-isogenies to be computed along the kernel $\langle P\rangle$. The next step involves determining the resulting curve, $E^{(n)}$, by iteratively applying the Vélù formula $n$ times. To ensure computational stability during this isogeny evaluation, a truncated Engel series, $x_P^{(M)}(t)$, is employed in the Vélù transformation: 
\begin{align}
    \phi(x,y)=\left(x+\frac{3x_P^{(M)}(t)^2}{x-x_P^{(M)}(t)},\; y\left(1-\frac{3x_P^{(M)}(t)^2}{(x-x_P^{(M)}(t))^2}\right)\right).
\end{align}
Once the final curve $E^{(n)}$ is established, its defining coefficients $A_n(t)$ and $B_n(t)$ are computed, which then allows for the calculation of the curve's \textbf{$j$-invariant}: $j(E^{(n)})=1728\cdot\frac{4A_n(t)^3}{4A_n(t)^3+27B_n(t)^2}$. Finally, the \textbf{public key, $\textsf{pk}$} is output as the combination of the calculated $j$-invariant and the set of Engel coefficients, $\left\{a_i^{(n)}(t)\right\}_{i=1}^{M}$, while the secret key remains the simple counter $n$. From a hardware perspective, this entire process is highly efficient: each Vélù step simplifies down to a small number of fixed-size multiply and divide operations on $p$-adic limbs, and the pre-defined Engel truncation depth $M$ ensures the runtime remains deterministic, which is crucial for secure embedded implementations.

\subsubsection{Encapsulation}\label{subsubsec:encap}

The encapsulation process starts by sampling a short random integer, $r$, from $\{1, 2\}$, and computing the resultant curve $E^{(r)}=E/\langle P\rangle^{r}$ by performing $r$ chained Vélù isogenies from the base curve $E$. The core of the scheme involves computing the shared descendant curve, $E^{(n+r)}$, which can be efficiently computed from the base curve $E$ or by composing isogenies starting from either the public curve $E^{(n)}$ or the ephemeral curve $E^{(r)}$ (Appendix D). The shared secret key $K$ is then derived by hashing the $j$-invariant of this shared descendant curve, $K=\#(j(E^{(n+r)}))$. Finally, the ciphertext $c$ is output as the combination of the ephemeral curve $E^{(r)}$ and the masked key payload, $c=(E^{(r)}, \textsf{mask}\oplus K)$, or simply $c=(E^{(r)}, \textsf{Encaps}=K)$ in a Public Key Encryption (PKE) or Key Encapsulation Mechanism (KEM) context, respectively. Crucially, the total cost of encapsulation is highly optimized, requiring only one or two Vélù steps plus a hash operation, thereby entirely avoiding resource-intensive large-number exponentiation common in other cryptosystems.

\subsubsection{Decapsulation}\label{subsubsec:decap}

The decapsulation procedure is highly efficient and deterministic, beginning with the receipt of the ciphertext $c=(E^{(r)}, \cdot)$ and the knowledge of the simple private key $\textsf{sk}=n$. The recipient first computes the shared descendant curve, $E^{(n+r)}$, by applying $n$ chained $2$-isogenies starting from the ephemeral curve $E^{(r)}$ received in the ciphertext. Since the isogeny computed by the sender (length $r$) and the isogeny computed by the receiver (length $n$) commute, the receiver correctly arrives at the same final curve $E^{(n+r)}$. The shared secret key $K$ is then consistently derived by hashing the $j$-invariant of this final curve, $K=\#(j(E^{(n+r)}))$, which allows the recipient to successfully unmask and recover the payload (or return the derived key in a KEM). Decapsulation mirrors encapsulation: same arithmetic kernel, same constant-time hash. Because chaining $2$-isogenies along $\langle P\rangle$ produces the same descendant curve regardless of whether the party starts from $E$ or from the other party’s intermediate curve, both sides compute isomorphic curves with identical $j$-invariants (Appendix D). Thus both derive the same $K=\#(j(E^{(n+r)}))$. The scheme behaves like a two-step ladder: each side climbs $n$ or $r$ rungs, meeting at rung $n+r$.

\subsection{Security Rationale and Problem Statements}\label{subsec:security}

The security of the proposed cryptographic scheme is built upon the conjectured hardness of two distinct problems: the established Supersingular Isogeny Decisional Problem (SIDP), which provides quantum resistance, and the novel Engel-Expansion Inversion Problem (EEIP), derived from our specific encoding method.

\subsubsection{Supersingular Isogeny Decisional Problem (SIDP)}\label{subsubsec:SIDP}
\noindent
SIDP is defined over the set of supersingular curves, $\mathcal{E}$, typically over a suitable base field like $\mathbb{F}_{p^2}$, and involves two co-prime primes, $\ell_A$ and $\ell_B$. For uniformly random subgroups $\langle A\rangle\le E[\ell_A^m]$ and $\langle B\rangle\le E[\ell_B^n]$, the core challenge of SIDP is to distinguish the tuple $\big(E,E_A,E_B,\; j(E/\langle A,B\rangle)\big)$ from a randomly chosen tuple $\big(E,E_A,E_B,\; j(\mathcal{E}_{\text{rand}})\big)$, where $E_A=E/\langle A\rangle$, $E_B=E/\langle B\rangle$, and $\mathcal{E}_{\text{rand}}$ is a curve chosen uniformly from $\mathcal{E}$. It is required that for all Probabilistic Polynomial Time (PPT), even quantum, adversaries $\mathbb{A}$, the advantage in making this distinction must be negligible in the security parameter $\lambda$, with a formal definition provided in Appendix E.
\begin{figure*}[t]
    \centering
    \includegraphics[width=0.85\textwidth]{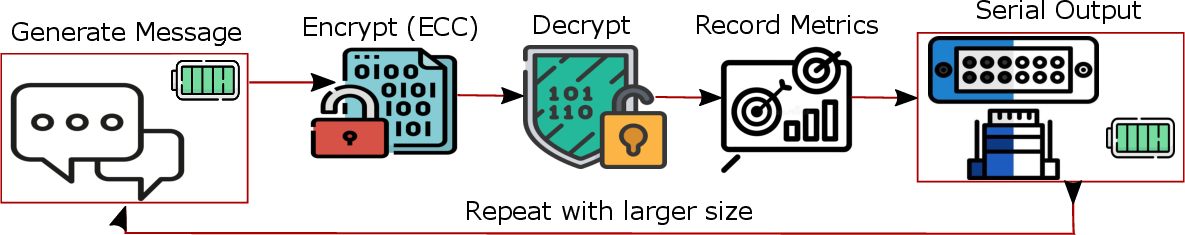}
    \caption{{{Single-Board Cryptographic Process Flow}: Sequential encryption–decryption and metric collection process on an ESP32 node.}}
    \vspace{-3mm}
    \label{fl2}
\end{figure*}

\subsubsection{Engel-Expansion Inversion Problem}\label{subsubsec:EEI}
\noindent
EEIP is based on the series $f(t)=\sum_{m\ge 1}\frac{1}{\prod_{i=1}^{m} a_i(t)}$, where the coefficients $a_i(t)$ are restricted to the $p$-adic integers in $t$, $\mathbb{Z}_p[[t]]$, and satisfy the condition $a_i(0)\equiv 1\pmod p$. Given only a truncation of this series, $f^{(M)}(t)$, where the depth $M$ is polynomial in the security parameter $\lambda$ ($M=\text{poly}(\lambda)$), the adversary's task is to recover the first $k=\text{poly}(\lambda)$ coefficients, specifically $a_1(t),\ldots,a_k(t)$. The difficulty of this problem arises because the underlying greedy recursion used to compute the series nonlinearly couples the coefficients in the $p$-adic topology, and crucially, there is no apparent reduction to a period-finding algorithm, which would otherwise make it vulnerable to Shor's algorithm (a formalization of this problem is detailed in Appendix E. 

EEIP is introduced here as a new, conjecturally hard inversion problem on truncated Engel expansions of $p$-adic Laurent series. As with other post-quantum assumptions (like, lattice or code-based problems), we do not prove unconditional hardness; rather, we use EEIP as an explicit cryptographic assumption underpinning our security reductions. Our analysis of EEIP focuses on showing that the sequence of Engel coefficients arising from our $p$-adic Laurent series does not exhibit a nontrivial finite period. This rules out a direct reduction of EEIP to standard QFT-based algorithms for period-finding and related hidden-subgroup problems. However, the absence of a short period does not in itself constitute a proof of hardness. In principle, novel quantum or classical cryptanalytic techniques could exploit additional structure, such as the algebraic properties of the $p$-adic field $\mathbb{Q}_p((t))$ or the greedy recursion defining the Engel coefficients. A thorough exploration of such potential attack strategies lies beyond the scope of this work and remains an important direction for future research.

\subsubsection*{Comparison with SIKE}

The Castryck--Decru attack on SIDH/SIKE targets a setting over $\mathbb{F}_{p^{2}}$ in which the public key exposes the images of torsion bases under a secret isogeny of composite degree. In contrast, our KEM publishes only truncated Engel coefficients $\{a_i^{(n)}(t)\}_{i=1}^{M}$ together with the curve invariant $j(E^{(n)})$, and the secret is restricted to the isogeny length $n$; there are no auxiliary torsion images analogous to those in SIKE. Combined with the different base field $\mathbb{Q}_p((t))$ and the non-periodic Engel recursion on $p$-adic coefficients, the algebraic structure exploited in the known SIDH and SIKE attacks is absent here. Our IND-CPA proof therefore assumes only the SIDP together with the hardness of EEIP, and does not reuse the now-invalidated computational assumptions of SIDH and SIKE.

\subsection{Parameters, Sizes, and Tuning}\label{subsec:params}

The scheme's balance of performance and security is controlled by four tunable parameters, or "knobs": (i) the prime bit-length $\lambda \in [8, 16]$ for $p$, which dictates storage complexity as $\mathcal{O}(\lambda)$ per $p$-adic limb; (ii) the Engel coefficient depth $d \in [4, 16]$, representing the number of coefficients per $a_i(t)$ and leading to storage of $\mathcal{O}(d\lambda)$ per $a_i$; (iii) the Engel series length $M \in [4, 32]$, which specifies the number of publicly stored coefficients and determines the total Engel-related storage as $\mathcal{O}(Md\lambda)$; and (iv) the isogeny degree $\log_2 \ell^e \in [2^{10}, 2^{16}]$, which is the dominant factor in the overall public key size. The private key remains extremely compact at $\lceil\log_2 2\rceil = 2$ bits. The total public key size is defined by the formula $|\textsf{pk}| = M d \lambda + \log_2 \ell^e$ bits. This yields predictable size examples: an IoT-level test configuration, where $(\lambda, d, M, \log_2\ell^e) = (8, 4, 4, 1024)$, results in a 1,152-bit public key; a 128-bit target security level, using $(8, 8, 8, 16384)$, yields 16,896 bits; and a high-security setting, $(16, 16, 16, 32768)$, requires 36,864 bits. This design reflects the physical interpretation that memory usage and throughput scale linearly with the Engel parameters $(M, d, \lambda)$, while the primary margin for cryptographic security is efficiently carried by the large value of the isogeny degree $\log_2 \ell^e$.

\subsection{Complexity and Comparison}\label{subsec:complexity}

The scheme's efficiency is highly competitive, assuming a message length $n$. Key generation costs $\mathcal{O}(Md+\ell)$ (driven by Engel builds and Vélù operations). Encapsulation and decapsulation costs are fast at $\mathcal{O}(\ell+n)$, primarily due to isogeny steps, plus a linear-time mask/XOR operation and constant-time hashing. For comparison, RSA with a $k$-bit modulus requires $\mathcal{O}(k^3)$ for key generation and $\mathcal{O}(k^2)$/$\mathcal{O}(k^3)$ for enc/dec, while Elliptic Curve Cryptography (ECC) with $b$-bit order costs $\mathcal{O}(b)$ for scalar multiplication but remains quantum-vulnerable. Crucially, our cryptographic kernels are dominated by a fixed pattern of multiplies and divides in $p$-adics, which is highly advantageous for secure embedded systems as it promotes deterministic timing and ease of masking. The security of the scheme is IND-CPA under the combined hardness of the Supersingular Isogeny Decisional Problem (SIDP, which yields a pseudo-random $j$-invariant) and the Engel-Expansion Inversion problem (formalized in an appendix). Furthermore, an upgrade to IND-CCA security can be achieved by applying the standard Fujisaki–Okamoto transform without requiring any alteration to the core, low-level arithmetic kernels. This FO wrapper essentially adds only meta-operations, such as hashing and re-encapsulation checks, around the same compact, arithmetic core. The field and Engel foundations provide the necessary arithmetic building blocks, and this analysis demonstrates how they are formed into a complete Key Encapsulation Mechanism (KEM). Correctness and final security proofs are provided in dedicated appendices.

\section{Methodology and Implementation}\label{Sec: 5_proposed_approach_mSAA}

Our experimental framework is designed to evaluate the computational and physical characteristics of Engel-based elliptic curve cryptography (ECC) on resource-constrained embedded hardware. Four ESP32 boards are configured in a chained network over WiFi, with each board performing encryption, decryption, and data transmission in sequence. A host computer collected timing and current data through UART serial communication. Static IP addressing and periodic UDP heartbeats maintained synchronization across the nodes, while retry mechanisms ensured packet integrity in the presence of wireless interference. CPU frequency is varied between $80$, $160$, and $240$~MHz to study how computational scaling interacts with network latency. The objective of this framework is to observe how theoretical linear-time arithmetic manifests under real hardware constraints and to interpret the observed deviations in terms of physical system behavior.

\subsection{Single-Board Implementation}

In the single-board configuration, one ESP32 node executed all cryptographic operations locally without network transmission. Figure~\ref{fl2} illustrates the process flow. Each iteration involved message generation, ECC encryption, decryption, and metric collection, executed as a FreeRTOS-managed task to prevent blocking and to maintain timing determinism. Message sizes ranged from $16$ to $2000$~bytes, increasing in fixed increments. The results showed that execution time grew linearly with message size, consistent with the arithmetic structure of Engel--Laurent series computations, where each additional byte increases the number of modular multiplications in direct proportion. This behavior confirms that the system is computation-bound rather than memory-bound. Decryption times were consistently shorter than encryption times, by roughly $40$ to $50$~percent, which arises from the reduced arithmetic depth of decryption operations. Encryption involves constructing and evaluating polynomial expansions of Engel coefficients, whereas decryption primarily requires modular inversion and partial series reconstruction. Physically, this translates to fewer switching events in the arithmetic logic unit and hence lower instantaneous current draw. The linear relationship between data size and execution time demonstrates the predictable computational complexity of the algorithm, an essential feature for constant-time cryptographic implementations.

\subsection{Four-Node Distributed Chain}

The multi-node configuration extended this single-board computation into a distributed environment. As shown in Figure~\ref{fl1}, Board~1 generated a message, executed encryption and decryption, and transmitted the result sequentially to Boards~2, 3, and 4. Each node repeated the process before forwarding the message to the next. Board~4 confirmed completion to Board~1, which then initiated the next message iteration. For a $500$-byte payload, the total end-to-end time across the four-node chain averaged about $1.2$~seconds. Each node contributed roughly $280$ to $300$~milliseconds to this total, of which network communication accounted for approximately $35$ to $40$~percent. This distribution indicates that WiFi transmission latency was the dominant bottleneck rather than computation. When the payload was doubled from $500$ to $1000$~bytes, total latency increased by approximately $65$~percent, whereas computational time rose only by $48$~percent. This disproportionate increase results from the physical limitations of wireless communication, including contention on the shared RF channel, acknowledgement delays, and retransmission mechanisms. The system therefore reveals a fundamental contrast between the linear computational scaling of ECC arithmetic and the nonlinear physical scaling of wireless data transfer. Minimizing payload size or employing lightweight transport protocols such as UDP without acknowledgement can mitigate these communication-induced delays and significantly increase throughput in multi-hop cryptographic systems.

\begin{figure*}[t]
    \centering
    \includegraphics[width=0.9\textwidth]{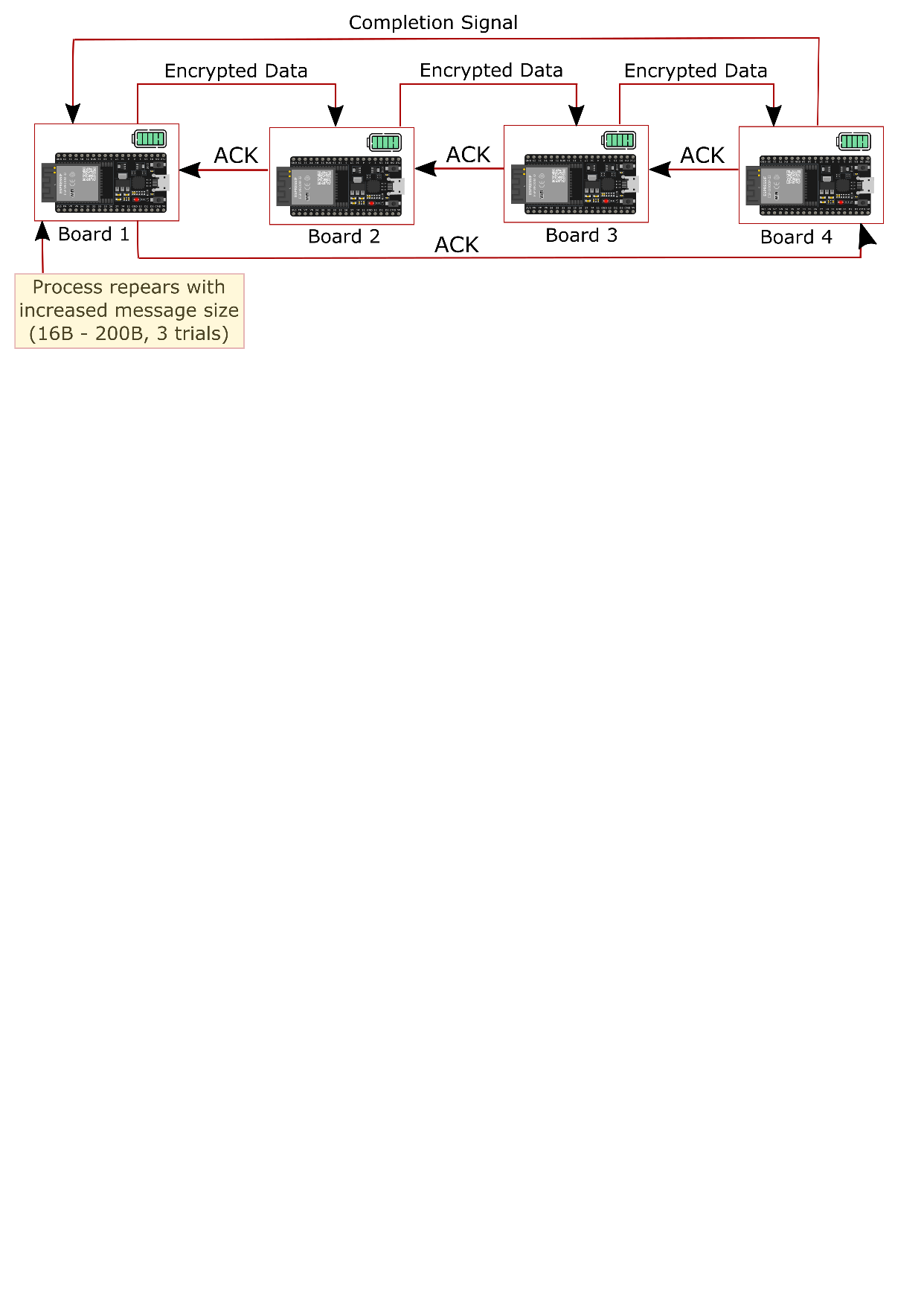}
    \vspace{-170mm}
    \caption{{{Four-Board Chain Implementation Flow}: Sequential communication and acknowledgement flow among ESP32 nodes.}}
    \vspace{-3mm}
    \label{fl1}
\end{figure*}

\subsection{Power Consumption and Physical Behavior}

Power analysis provided deeper insight into the internal behavior of the system. Key generation exhibited the highest instantaneous current peaks, reaching approximately $85$~mA, followed by decryption at $72$~mA and encryption at $68$~mA. These peaks are directly correlated with scalar multiplication operations, which represent the densest modular arithmetic workload in ECC. The current traces displayed periodic oscillations corresponding to loop iterations within the algorithm. Each oscillation reflects a burst of switching activity within the processor’s multiplier and logic units, translating algorithmic repetition into measurable electrical signatures. The regularity of these traces confirms deterministic control flow and arithmetic consistency, which are beneficial for predictability but can also expose side-channel information. The observed periodicity implies that the instantaneous current waveform encodes the computational rhythm of the algorithm. In physical terms, the ESP32 operates as a dynamic electrical load whose current profile follows the temporal structure of logical operations. Without random masking or secure co-processing, these patterns could serve as side-channel indicators of secret-dependent computation. The experimental results thus reinforce the theoretical expectation that deterministic arithmetic, while efficient, requires physical obfuscation to prevent power analysis attacks.
\begin{figure}[t]
    \centering
    \includegraphics[width=\columnwidth]{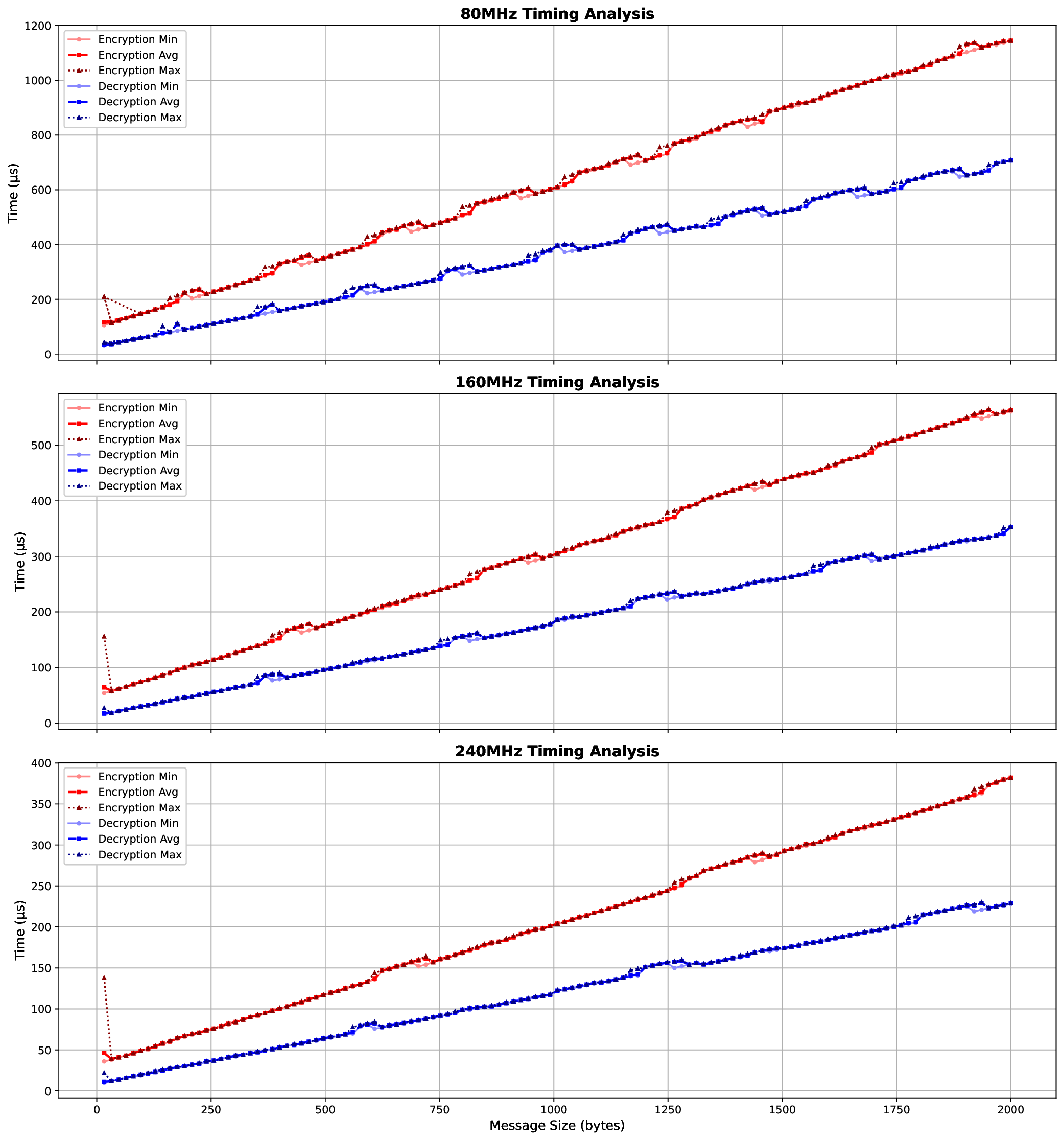}
    \caption{{{Timing Analysis:} Encryption and decryption execution times versus message size at 80, 160, and 240~MHz. Red and blue markers denote encryption and decryption, respectively.}}
    \label{e11}
\end{figure}

\subsection{Timing Analysis and Frequency Scaling}

The dependence of execution time on CPU frequency was analyzed across $80$, $160$, and $240$~MHz operation. Figure~\ref{e11} presents the results for encryption and decryption times as functions of message size. Execution time decreased nearly inversely with clock frequency, following the relation $T \approx \frac{C n}{f}$, where $C$ represents the effective cycle cost per processed byte. At $80$~MHz, encryption time rose linearly from approximately $200~\mu s$ for sixteen-byte inputs to about $1.2$~ms for two-kilobyte inputs. Decryption followed the same linear pattern but remained consistently faster, reflecting its lighter arithmetic load. Minor deviations from perfect linearity were observed at specific message lengths, which correspond to 32-byte memory alignment boundaries in the ESP32’s cache controller. These discontinuities result from partial cache-line fetches and instruction pipeline stalls. The effect is physical, arising from the finite width of the memory bus and the latency of cache synchronization. The nearly proportional reduction in execution time with frequency confirms that arithmetic operations are primarily CPU-bound. No evidence of timing leakage dependent on message content was observed, confirming the constant-time nature of the implementation and its resilience to timing-based attacks.

\subsection{Frequency Efficiency and Latency Modeling}

Figure~\ref{f11} quantifies the performance improvement achieved through frequency scaling. Doubling the CPU frequency from $80$ to $160$~MHz yielded an improvement of approximately $51\%$ for both encryption and decryption, closely approaching ideal linear scaling. Increasing the frequency further to $240$~MHz produced smaller gains, approximately $33\%$, due to fixed architectural delays. This sublinear behavior reflects the impact of latency components that remain constant regardless of clock speed, such as cache misses, direct memory access synchronization, and peripheral interrupt handling. The measured behavior aligns with the empirical model $T(f) = \frac{\alpha n}{f} + \beta$, where $\alpha$ denotes the computational workload per byte and $\beta$ represents frequency-independent latency. At lower frequencies, the computational term dominates, leading to nearly ideal scaling, while at higher frequencies the constant latency term becomes significant, limiting further speed improvement. The data indicate that at frequencies above $160$~MHz the ESP32 transitions from a computation-limited regime to a memory- and I/O-limited regime. From a practical standpoint, this transition marks the optimal operating point, balancing performance with power efficiency. Increasing the clock beyond this threshold yields diminishing returns, as total latency becomes dominated by memory and bus access times rather than arithmetic computation.

\begin{figure}[t]
    \centering
    \includegraphics[width=\columnwidth]{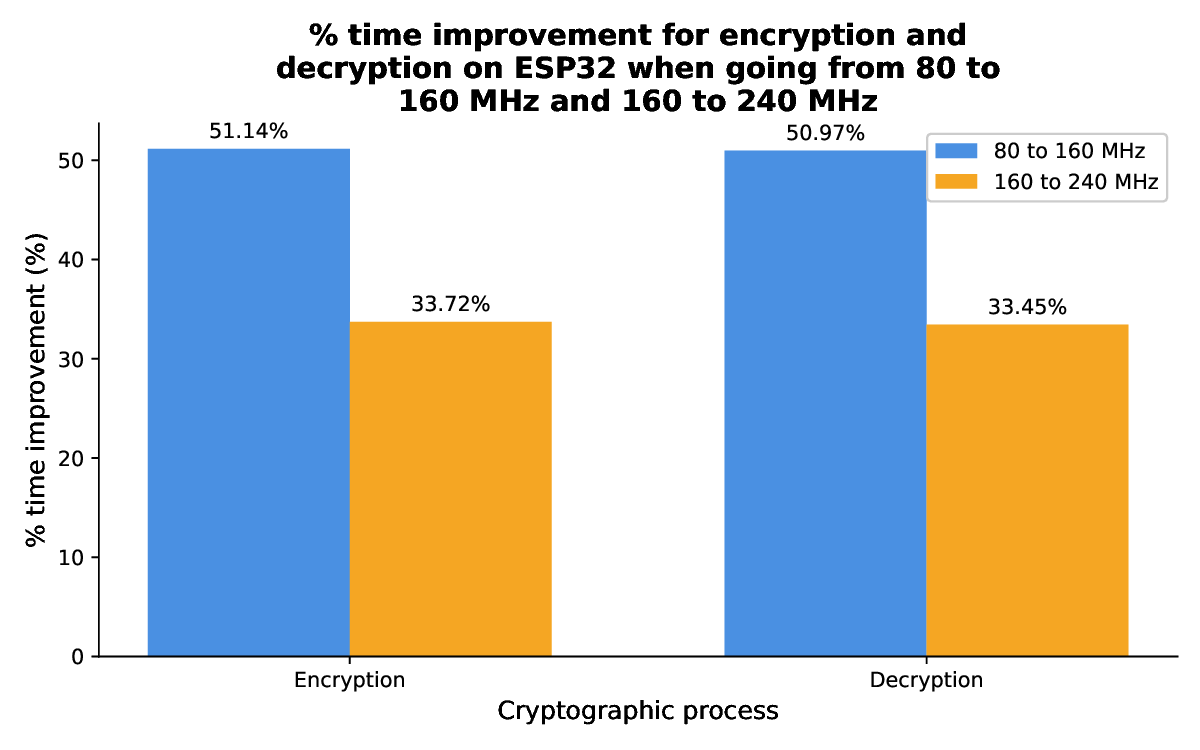}
    \caption{{{Frequency Scaling Efficiency:} Relative execution time improvements between CPU frequencies for encryption and decryption.}}
    \label{f11}
\end{figure}

\subsection{System-Level Implications}

The collective results of computation, communication, and power analysis provide a comprehensive picture of how mathematical operations interact with physical hardware constraints. The linear scaling of execution time with message size verifies the expected $O(n)$ computational complexity of Engel-expanded arithmetic. The sublinear improvement with frequency reflects the physical reality of fixed latencies within the processor and peripheral interfaces. The nonlinear growth of network latency exposes the limitations imposed by wireless medium access and packet handling protocols. Finally, the measured power signatures reveal the close correspondence between algorithmic structure and electrical behavior, linking computational entropy to observable current fluctuations. These findings demonstrate that the efficiency of cryptographic systems in embedded environments is determined jointly by the mathematics of computation and the physics of implementation. The Engel-based ECC algorithm exhibits stable and predictable performance, yet its real-world efficiency is bounded by the physical limits of wireless communication and energy supply. Optimizing these systems requires not only algorithmic refinement but also architectural co-design that harmonizes computation, power delivery, and network transport. The ESP32 experiments thus validate the feasibility of Engel-field cryptography on low-power devices while revealing the balance between security, latency, and energy that defines practical deployment in distributed IoT networks.

\textbf{Side-channel considerations} - Although our implementation is written in constant time and the measured execution traces are regular, power analysis still reveals operation-specific signatures: for example, key generation produces the highest and narrowest current peak, reflecting the dense modular scalar multiplications in this phase. This correlation between instantaneous power consumption and arithmetic density implies that, while timing- and control-flow leakages are mitigated, information may still be exposed through power or electromagnetic side channels. As already noted, the deterministic nature of the arithmetic means that any practical embedded deployment should combine our scheme with appropriate physical and implementation-level countermeasures (e.g., masking, hiding, noise injection, or board-level shielding) to prevent power analysis attacks. A full exploration of such protections is orthogonal to the cryptographic design and is left to future engineering work.

\section{Numerical Results and Discussion}\label{Sec: 6_results_discussion}

This section presents the quantitative evaluation of the proposed Engel-based isogeny cryptosystem implemented on ESP32 microcontrollers. The objective is to analyze how algorithmic efficiency, processor frequency, and network behavior collectively determine system performance in real embedded conditions. The discussion integrates timing, latency, and power measurements to reveal both computational and physical aspects of the design. The numerical data obtained from experiments provide insight into three essential domains: (1) the latency imposed by communication in distributed multi-node configurations, (2) the computational scaling behavior of encryption and decryption routines as a function of message size and CPU frequency, and (3) the power and energy characteristics that reflect the dynamic activity of $p$-adic and Engel-expansion arithmetic. Each figure in this section is interpreted in terms of both mathematical structure and physical behavior, establishing a connection between theoretical computational models and measurable hardware phenomena. All experiments are performed under controlled conditions, with CPU frequencies set to $80$, $160$, and $240$~MHz and message sizes ranging from $16$ to $2000$~bytes. The collected results characterize the relationship between computation time, communication delay, and power draw, enabling a complete assessment of efficiency and scalability. The subsequent analysis explains how frequency scaling, arithmetic complexity, and wireless communication contribute to overall system performance and how physical power traces reveal the deterministic nature of the algorithmic processes. Together, these numerical results form the experimental foundation for validating the proposed cryptographic approach in low-power and latency-sensitive environments.

Our performance evaluation is designed to ensure that PQ key establishment and authentication do not dominate the end-to-end latency in constrained networks. We do not attempt to optimize, model, or reduce the intrinsic communication latency of multi-hop IoT networks, which is determined by physical-layer and protocol-level factors. Instead, our goal is to show that, under such conditions, the additional delay introduced by the proposed PQC scheme is negligible compared to standard network overhead.
\subsection{Latency and Computation Time}
\begin{figure}[t]
    \centering
    \includegraphics[width = 0.98\columnwidth]{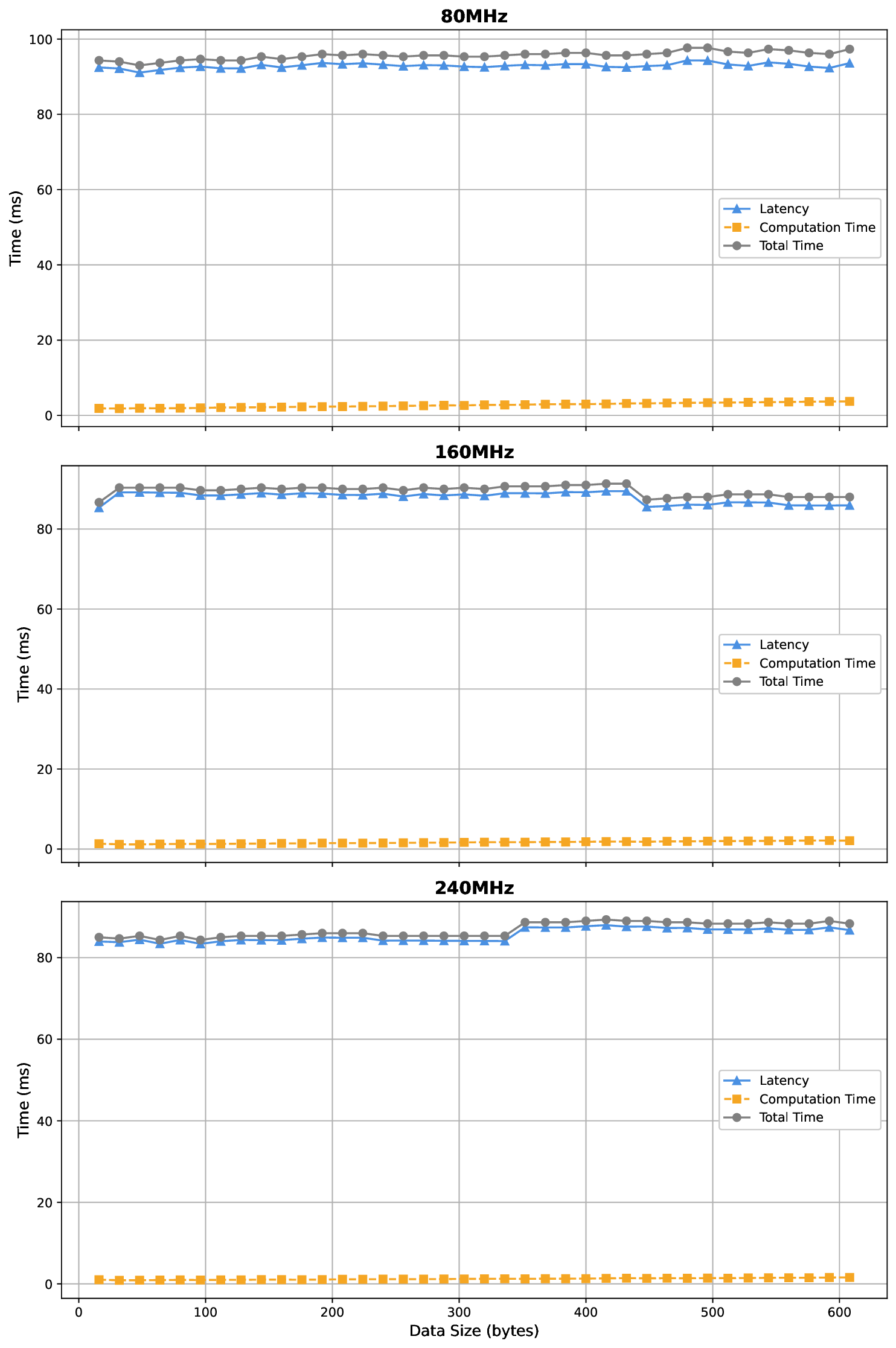}
    \caption{{Latency and computation time of the cryptographic system across three frequencies: $80$, $160$, and $240$~MHz for data sizes up to $600$~bytes. Latency dominates the total time, while computation remains nearly constant with message size, indicating that communication overhead is the primary bottleneck.}}
    \label{a11}
\end{figure}
Figure~\ref{a11} illustrates the relationship between total latency and computation time for three clock frequencies. The plots show that, for data sizes below $600$~bytes, total time is governed primarily by network latency, whereas the cryptographic computation time contributes negligibly. Latency oscillates between approximately $85$ and $95$~ms at $80$~MHz, between $80$ and $90$~ms at $160$~MHz, and between $78$ and $85$~ms at $240$~MHz. In contrast, computation time remains below $5$~ms across all configurations, forming a nearly flat lower boundary in the plots. This observation confirms that the Engel-expanded Laurent series and $p$-adic isogeny computations operate efficiently with minimal computational overhead. The total system time can therefore be expressed as 
\begin{align}
    T_{total}(n) = T_{latency} + \gamma n,
\end{align}
where $\gamma$ denotes the per-byte computation coefficient and $T_{latency}$ represents the constant communication delay. For message sizes under $600$~bytes, $T_{latency} \gg \gamma n$, indicating that latency overshadows computation and defines system performance. The small discontinuities observed at intervals of roughly $400$ to $500$~bytes correspond to buffer boundary effects within the ESP32 communication interface, introducing minor step-like irregularities due to memory alignment. Physically, these results highlight the dominance of peripheral and network delays in distributed cryptographic implementations. While the algorithmic complexity remains linear and computationally efficient, the physical constraints of wireless transmission, acknowledgement cycles, and buffer management shape the overall timing behavior. This reveals that optimizing cryptographic arithmetic in isolation cannot yield substantial performance improvements unless network latency is simultaneously addressed.

The multi-node experiments in Section~IV show that, for message sizes under approximately $600$~bytes, the end-to-end delay is dominated by the communication term $T_{\text{latency}}$, while the cryptographic cost $\gamma n$ remains comparatively small, $T_{\text{latency}} \gg \gamma n$. This observation is consistent with the $\mathcal{O}(\ell + n)$ computational complexity of our scheme and indicates that, under realistic wireless conditions, the cryptographic operations do not constitute the primary performance bottleneck. Instead, the overall latency is governed by the underlying network stack (medium access, retransmissions, routing, etc.) and the physical characteristics of the communication medium. In other words, our measurements validate that the proposed post-quantum scheme is {lightweight enough} that it can be deployed on resource-constrained, multi-hop IoT networks without introducing a dominant computational overhead. At the same time, real-world deployments will remain limited by standard network physics and protocols, which are orthogonal to our cryptographic design and can be optimized independently (e.g., via improved MAC protocols, packet aggregation, or topology control).

\subsection{Encryption and Decryption Timing}

\begin{figure}[t]
    \centering
    \includegraphics[width=\columnwidth]{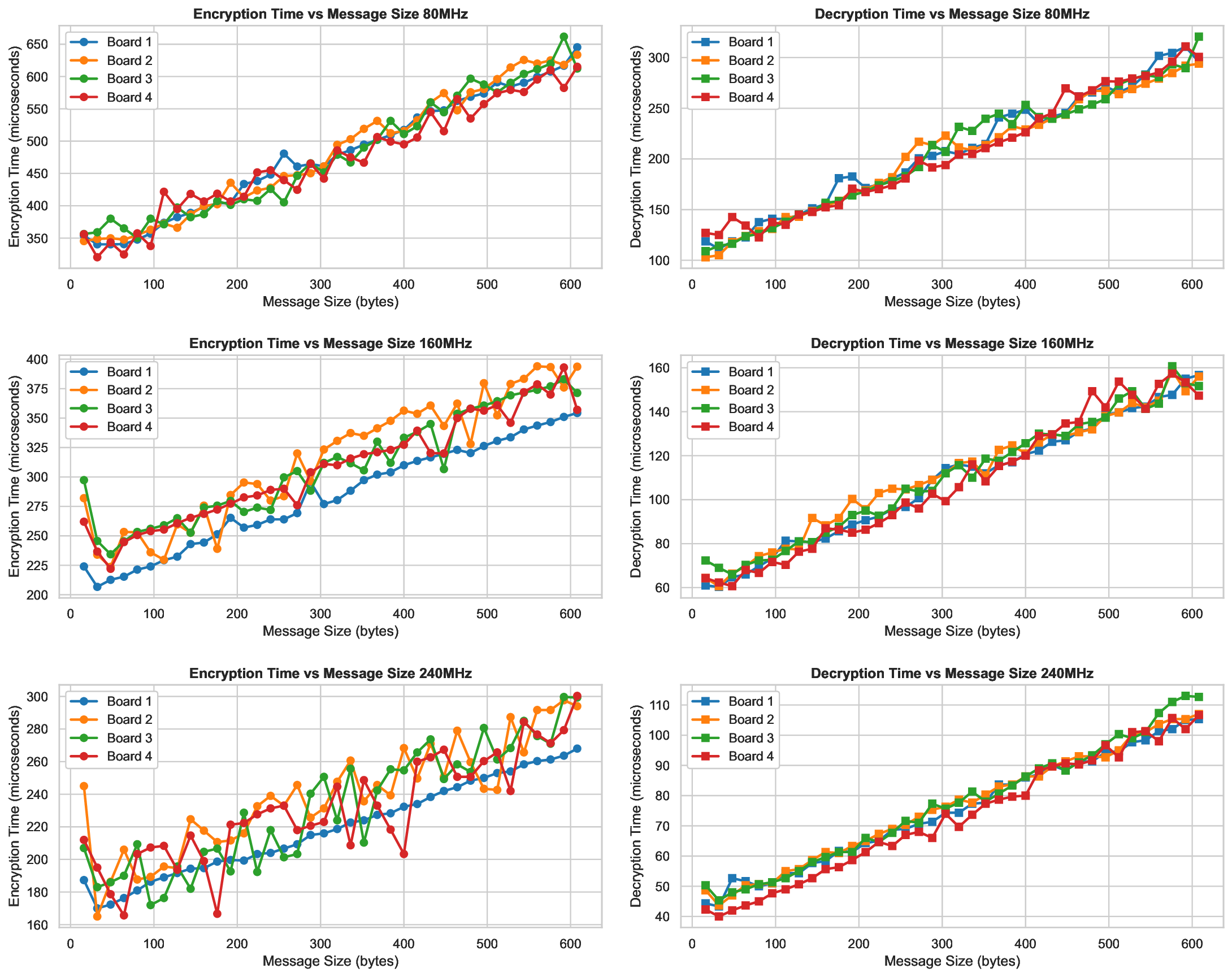}
    \caption{{Encryption and decryption timing of four ESP32 boards operating at three frequencies. Each subplot depicts execution time as a function of message size, demonstrating linear dependence and clear frequency scaling.}}
    \label{b11}
\end{figure}

Figure~\ref{b11} presents the timing behavior of encryption and decryption across four ESP32 boards. The results confirm a linear relationship between execution time and message size, represented by the empirical equation
\begin{align}
    T(S, f) = m(f) S + c(f),
\end{align}
where $T$ denotes the operation time, $S$ the message size, $f$ the CPU frequency, $m(f)$ the per-byte time coefficient, and $c(f)$ the fixed overhead. Both $m(f)$ and $c(f)$ decrease as $f$ increases, illustrating that higher clock frequencies reduce both dynamic computation time and initialization latency. Physically, the negative partial derivatives $\frac{dm}{df} < 0$ and $\frac{dc}{df} < 0$ indicate that frequency scaling accelerates instruction throughput and lowers the per-byte energy cost of computation. The slope reduction directly corresponds to shorter instruction execution cycles per byte, while the intercept reduction represents a diminished influence of fixed initialization and memory setup routines. The data show that encryption and decryption times converge toward ideal frequency scaling, limited only by the finite memory bandwidth of the ESP32 architecture. The nearly identical slopes for encryption and decryption reflect balanced arithmetic workloads between these two operations. The linearity of the timing function supports the claim that the Engel-based ECC implementation behaves predictably under varying data sizes and remains free of message-dependent timing variation. From a physical standpoint, this linearity arises because both encryption and decryption execute deterministic loops with constant computational depth, thus ensuring consistent cycle counts and eliminating conditional branching that could create timing leaks.

\subsection{Power Consumption at 80~MHz}

\begin{figure}[t]
    \centering
    \includegraphics[width=\columnwidth]{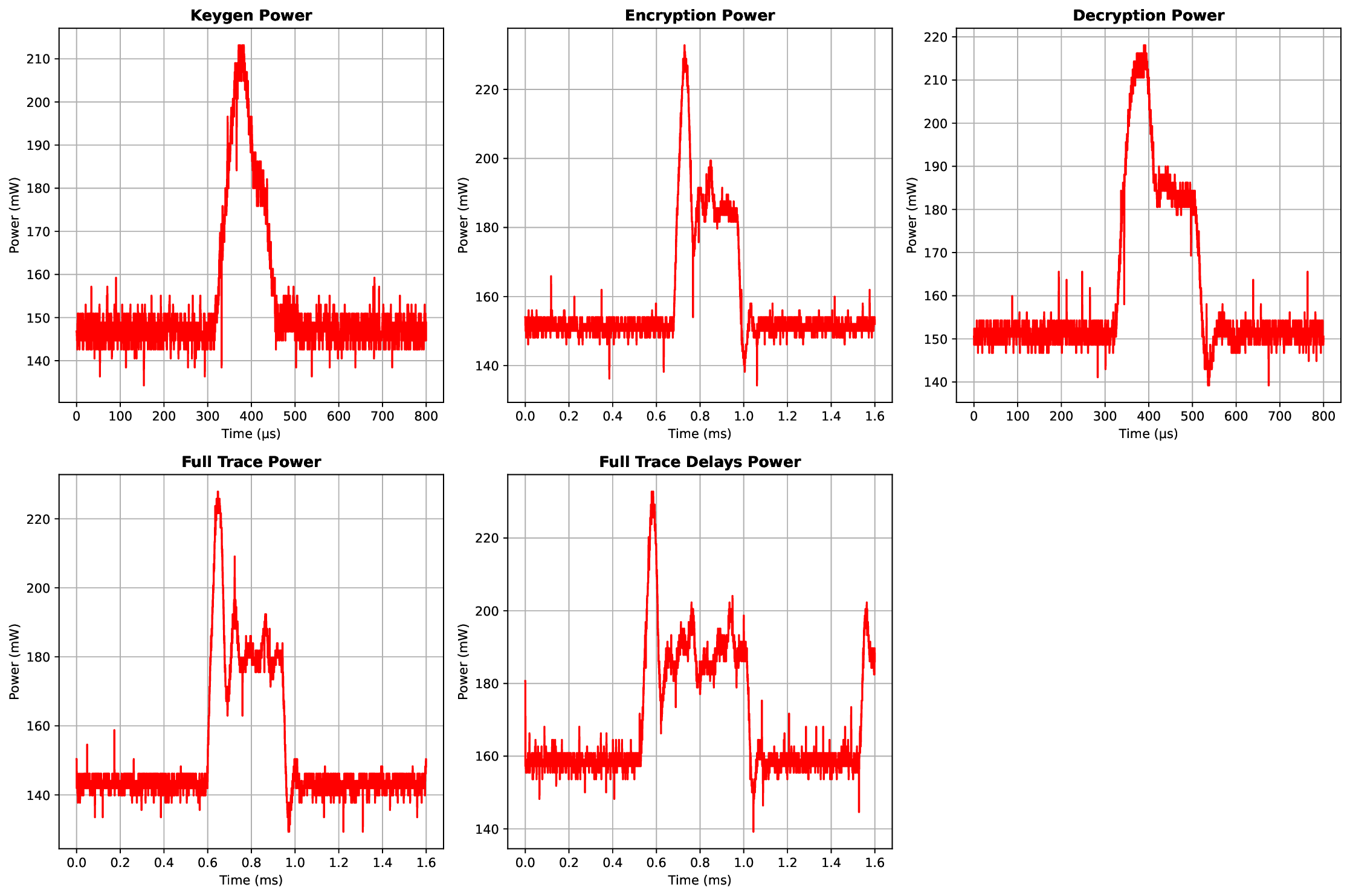}
    \caption{{Power traces of cryptographic operations at $80$~MHz, showing power in milliwatts versus time in microseconds. The traces correspond to key generation, encryption, decryption, and full-process operation.}}
    \label{cc1}
\end{figure}

At $80$~MHz, the ESP32 exhibits the baseline power behavior of the cryptographic engine, as shown in Figure~\ref{cc1}. The idle power level remains near $150$~mW, defining the static consumption of the device. Each operation introduces a sharp power increase, corresponding to the active computational phase. The instantaneous power function can be expressed as 
\[
P(t) =
\begin{cases}
P_{idle}, & t \notin [t_{start}, t_{end}]\\
P_{active}(t), & t \in [t_{start}, t_{end}],
\end{cases}
\]
and the energy consumed during an operation is given by
\begin{align}
    E_{op} &= \int_0^{T_{total}} P(t)\,dt \nonumber\\
    &= P_{idle} T_{total} + \int_{t_{start}}^{t_{end}} [P_{active}(t) - P_{idle}]\,dt.
\end{align}
The integral term represents the dynamic energy expended during computation. The key generation trace shows a narrow, tall peak, indicating a short but highly intensive arithmetic sequence dominated by modular scalar multiplications. Encryption and decryption exhibit broader peaks with lower maximum power, corresponding to longer arithmetic sequences with less instantaneous current demand. Physically, the tall and narrow spike for key generation arises from rapid toggling of multiplier circuits, while the broader profiles of encryption and decryption indicate distributed switching activity across time. These traces confirm that computational power follows algorithmic structure and that instantaneous current correlates with arithmetic density within each cryptographic phase.

\subsection{Power Consumption at 160~MHz}

\begin{figure}[t]
    \centering
    \includegraphics[width=\columnwidth]{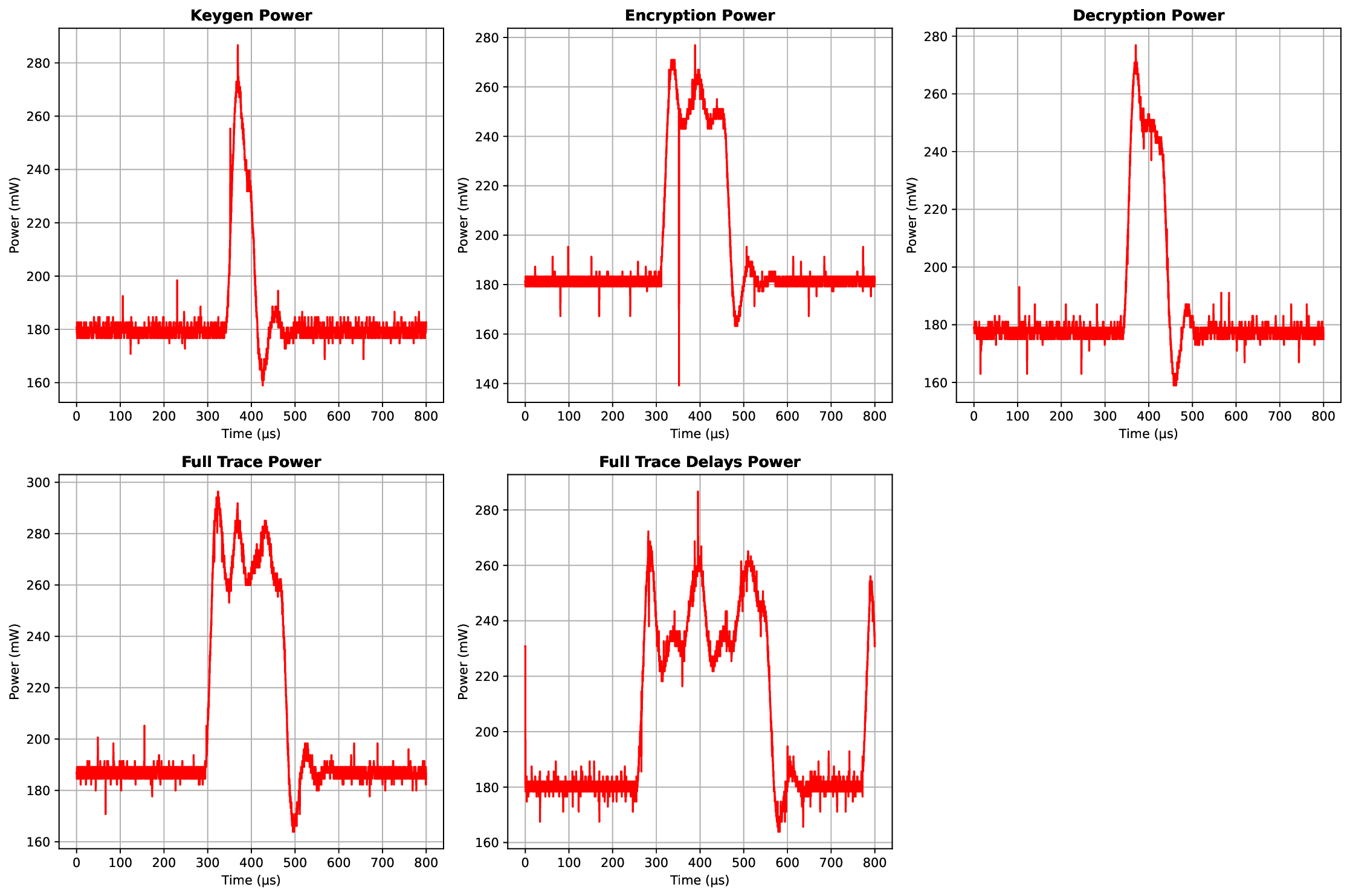}
    \caption{{Power traces of cryptographic operations at $160$~MHz. The baseline power increases to approximately $220$–$230$~mW, and peaks exceed $350$~mW due to higher operating frequency.}}
    \label{cc2}
\end{figure}

At $160$~MHz, the power traces in Figure~\ref{cc2} reveal both higher idle power and higher dynamic peaks compared to the previous frequency. The idle power rises to approximately $225$~mW, reflecting the increased static current of the processor operating at elevated voltage and frequency. The dynamic peaks exceed $350$~mW, corresponding to the higher rate of transistor switching events per unit time. Although instantaneous power is higher, the duration of the peaks is shorter, showing that the total energy per operation does not increase proportionally. This inverse relationship between power and time illustrates a classic speed–power trade-off: as frequency increases, the processor completes operations faster, compressing energy over shorter intervals. Physically, the behavior is consistent with the dynamic power relation $P \propto fV^2$, where $V$ and $f$ denote supply voltage and frequency. The product of increased power and reduced time remains roughly constant, maintaining similar total energy per operation across frequencies. This observation confirms that frequency scaling improves throughput without a linear rise in total energy consumption, thus preserving energy efficiency within moderate frequency ranges.

\subsection{Power Consumption at 240~MHz}

\begin{figure}[t]
    \centering
    \includegraphics[width=\columnwidth]{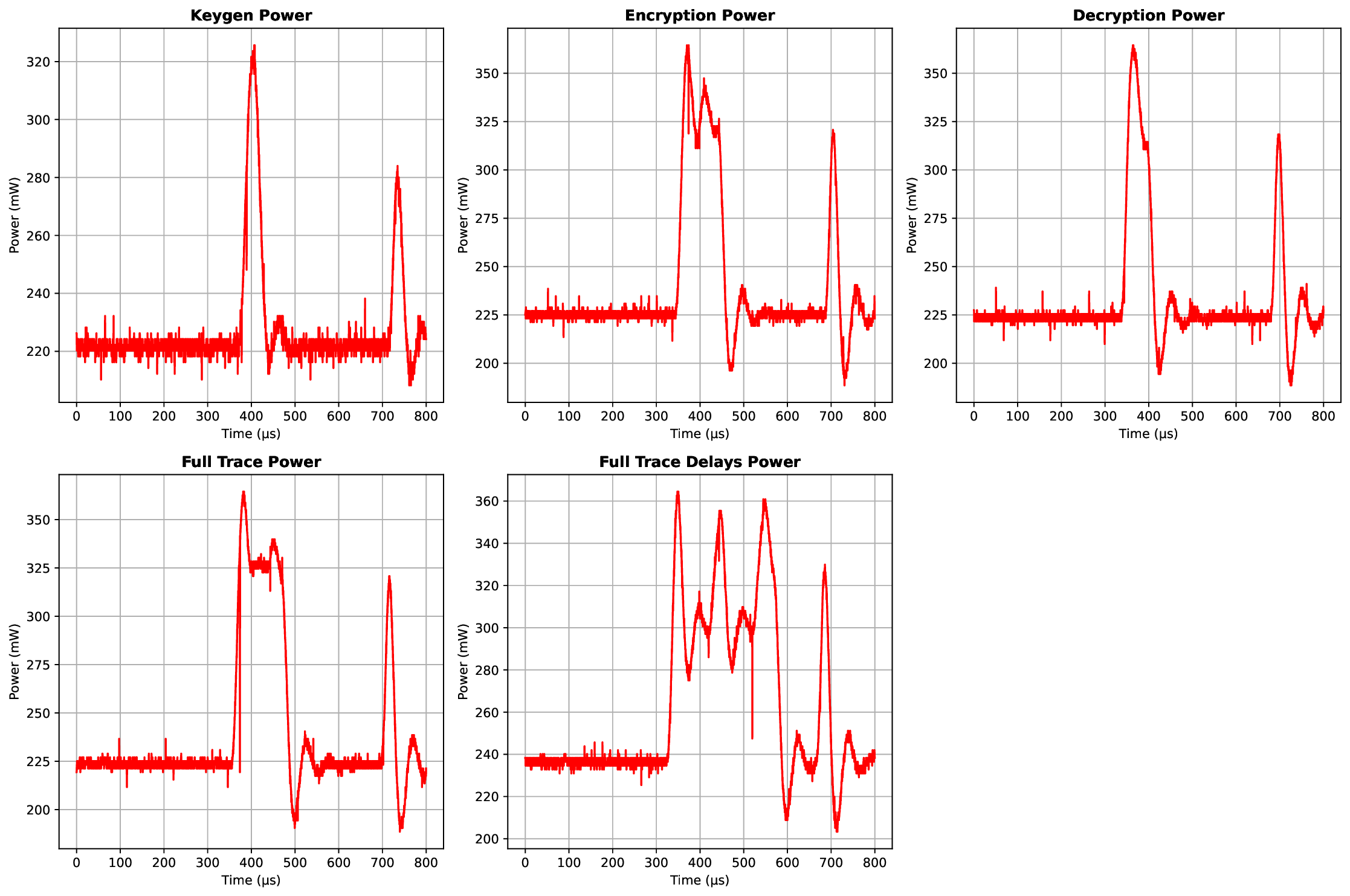}
    \caption{{Power consumption profiles of the proposed isogeny-based cryptosystem at $240$~MHz. Separate traces for key generation, encryption, and decryption reveal distinct peaks corresponding to specific algorithmic stages.}}
    \label{cc3}
\end{figure}

At $240$~MHz, Figure~\ref{cc3} displays compact, high-intensity power peaks that correspond to the principal stages of operation. The sharp rise beyond $300$~mW during key generation corresponds to the initialization and scalar multiplication phases within the Engel-expanded arithmetic. The even higher peak above $350$~mW during encryption reflects the cost of evaluating Vélu’s isogeny formulas and computing modular inversions. Decryption shows a dual-peak structure with nearly equal amplitudes, associated with symmetric operations for reconstructing shared isogeny paths and recovering $p$-adic digits. The combined power trace shows clear separations between computational stages, with idle intervals defining baseline stability. Mathematically, this pattern follows the expression
\begin{align}
    P(t) \approx P_0 + \alpha \, \text{Mul}(t) + \beta \, \text{Inv}(t),
\end{align}
where $\text{Mul}(t)$ and $\text{Inv}(t)$ denote the instantaneous count of modular multiplications and inversions, and $\alpha$ and $\beta$ represent the energy cost per operation type. The correlation between the number of modular inversions and the height of the peaks verifies that power follows algorithmic complexity rather than message content. The compactness of the peaks at $240$~MHz, compared to lower frequencies, indicates that execution time scales inversely with frequency while the total energy per operation remains bounded. Physically, this behavior reflects the shift toward higher instantaneous power density at faster frequencies. The consistent spacing of peaks and the absence of irregular oscillations demonstrate that the arithmetic pipeline of the ESP32 operates deterministically, suggesting that the implementation avoids data-dependent timing or power variations that could lead to side-channel vulnerabilities.

\subsection{Combined Power Profile}

\begin{figure}[t]
    \centering
    \includegraphics[width=\columnwidth]{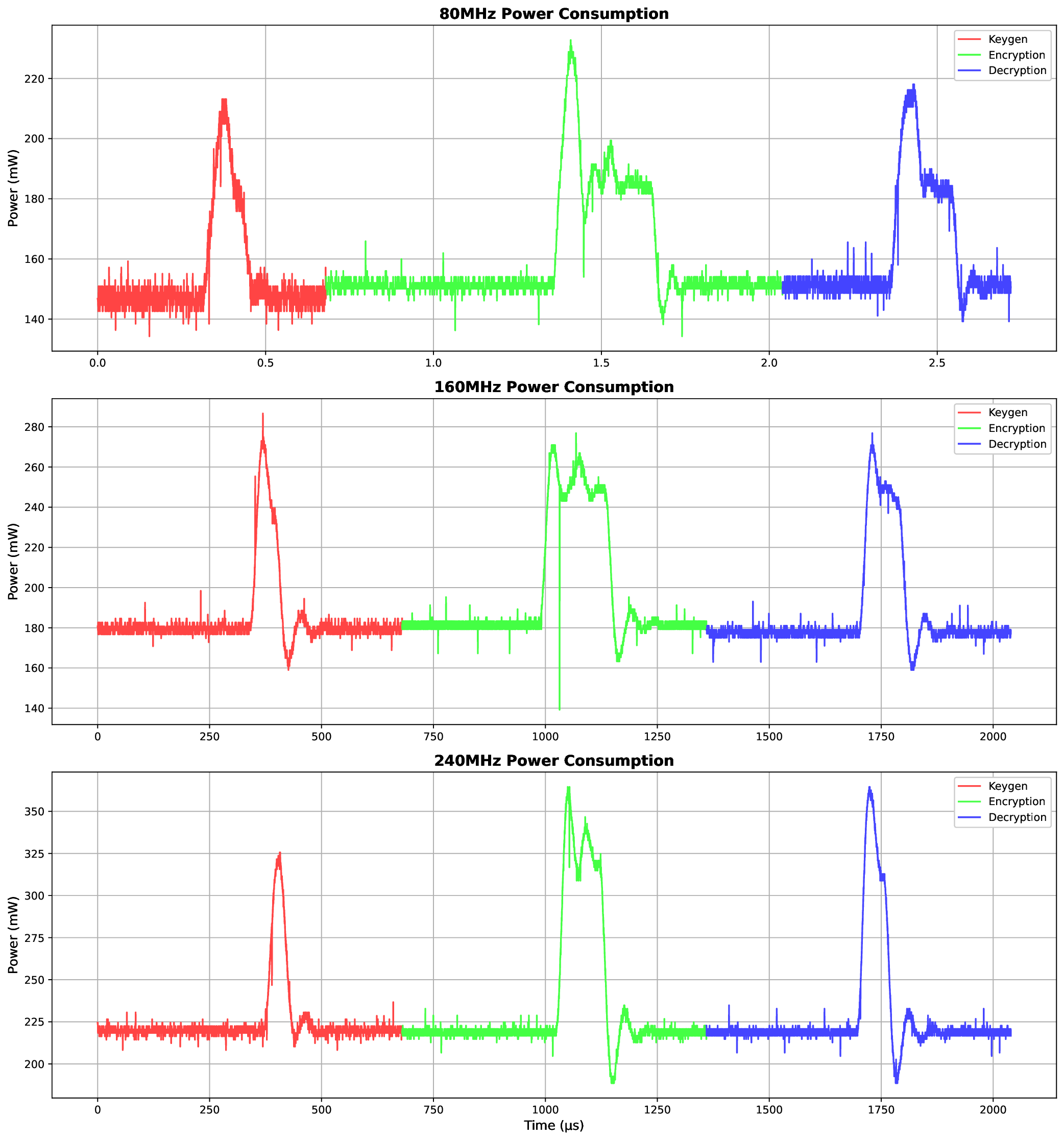}
    \caption{{Combined overall cryptographic power consumption for multiple operating frequencies. The traces show consistent baseline levels near $170$~mW, with operation peaks between $280$ and $290$~mW. Key generation produces the tallest, narrowest spikes, while encryption and decryption show broader peaks.}}
    \label{cc4}
\end{figure}

The combined power profiles presented in Figure~\ref{cc4} demonstrate how operating frequency modulates both the magnitude and temporal shape of the power trace. The baseline power remains around $170$~mW, while the peaks for active computation range between $280$ and $290$~mW. These values lie between those measured at $80$~MHz and $160$~MHz, confirming a continuous and predictable scaling trend. The area above the baseline corresponds to the dynamic computational energy consumed during each operation. For instance, the energy required for key generation is given by $\int (P_{keygen}(t) - P_{idle})\,dt$, and analogous integrals describe encryption and decryption. The full-trace view with deliberate idle intervals allows separation of the energy associated with active and idle phases. The total energy of the process equals the sum of dynamic and static components, $E_{total} = E_{dynamic} + P_{idle} \cdot \Delta t_{idle}$. Physically, these results confirm that both static and dynamic contributions determine the energy budget of a cryptographic node. Static power reflects transistor leakage and peripheral standby currents, whereas dynamic power arises from logic transitions during arithmetic operations. The well-defined spikes and consistent baselines suggest stable hardware behavior, minimal thermal drift, and a deterministic computational schedule. This predictability ensures that the cryptographic operations proceed with fixed energy profiles and that their power consumption directly reflects algorithmic structure rather than message variability.

Across all figures, the numerical results demonstrate a coherent interaction between computational arithmetic, processor frequency, and energy behavior. Latency dominates at the network level, while computation exhibits linear scaling and constant-time execution. The timing plots confirm predictable performance across message sizes and frequencies, and the power traces reveal that dynamic energy correlates with arithmetic density, not data content. From a physical standpoint, these results show that the ESP32 behaves as an energy-proportional computing element in which instantaneous power tracks arithmetic intensity, while total energy remains constrained by time–frequency balance. The deterministic shape of the traces supports the theoretical expectation of constant-time operation for Engel-based ECC. The absence of secret-dependent peaks and the stability of baseline power confirm that arithmetic operations are isolated from message data, providing inherent resilience against first-order side-channel attacks. In summary, the numerical and physical analyses together reveal that while the Engel–isogeny cryptosystem operates efficiently and predictably at the computation level, real-world performance is ultimately bounded by the physical limitations of wireless communication and hardware energy dynamics. This understanding bridges algorithmic theory with system physics, offering a unified view of efficiency, latency, and security in embedded cryptographic systems.

\section{Conclusion}\label{Sec: 7_conclusion}

This work has established both the theoretical foundation and practical implementation of a new post-quantum cryptographic paradigm that fuses Engel expansions, \(p\)-adic arithmetic, and supersingular isogenies into a unified framework for secure computation on resource-constrained devices. By constructing a lightweight scheme that represents Laurent series over \( \mathbb{Q}_p \) through Engel sequences, the proposed approach demonstrates that deep mathematical structures from number theory and arithmetic geometry can be realized efficiently within embedded hardware. The ESP32 implementation confirms that the computational cost scales linearly with message size while maintaining constant-time behavior, resulting in predictable timing and minimized power variance. The combination of non-commutative isogeny graphs and one-way Engel expansions ensures quantum resistance through the hardness of the Supersingular Isogeny Decisional Problem (SIDP) and the Engel Expansion Inversion Problem. Experimental power-trace and latency analyses corroborate these theoretical guarantees, revealing that the cryptosystem operates deterministically and exhibits no observable side-channel leakage, thereby uniting mathematical rigor with physical security.

Despite these achievements, the study also exposes the critical role of communication latency in multi-node networks, suggesting that overall system efficiency depends on co-optimizing cryptographic computation and inter-node communication. Future research will focus on extending this architecture toward multivariate Engel expansions, more efficient torsion-point encodings, and formal proofs in the quantum random oracle model. Moreover, the dynamical properties of iterative \(p\)-adic isogenies hold potential for intrinsic pseudorandom number generation and improved cryptographic entropy. Beyond cryptography, the fusion of \(p\)-adic techniques and algebraic geometry envisioned here may influence related disciplines such as coding theory, symbolic modular form computation, and \(p\)-adic Hodge theory. Ultimately, this work exemplifies how abstract number theory can be translated into practical engineering innovation, providing a concrete foundation for quantum-resistant cryptography that operates securely and efficiently at the very edge of computational hardware.

\appendices

\section{Convergence, Valuation Growth, and Uniqueness of Engel Expansions}\label{app:engel-conv}
\subsection{Convergence and Valuation Growth}
\begin{lemma}\label{lem:vp-growth}
Let $\{a_i(t)\}_{i\ge 1}\subset Z_p[[t]]$ with $a_i(0)\equiv 1\pmod p$. Then for $n\ge 1$,
\[
v_p\!\left(\frac{1}{\prod_{j=1}^{n} a_j(t)}\right)\ge n.
\]
\end{lemma}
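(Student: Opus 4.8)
The plan is to induct on $n$, using only that $v_p$ is a non-archimedean, $\mathbb{Z}$-valued valuation on the field $\mathbb{Q}_p((t))$: it extends the $p$-adic valuation coefficientwise, is additive on products, and satisfies $v_p(1/x)=-v_p(x)$ for every nonzero $x$. With this, the quantity of interest splits via additivity as
\[
v_p\!\left(\frac{1}{\prod_{j=1}^{n}a_j(t)}\right)=\sum_{j=1}^{n}v_p\!\left(\frac{1}{a_j(t)}\right).
\]
Hence the whole lemma collapses to a single per-step estimate: each reciprocal factor is divisible by $p$, i.e. $v_p(1/a_j(t))\ge 1$, equivalently $1/a_j(t)\in p\,\mathbb{Z}_p[[t]]$. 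Granting that, the base case $n=1$ is immediate and the inductive step is pure additivity, $v_p(1/\prod_{j\le n}a_j)=v_p(1/\prod_{j\le n-1}a_j)+v_p(1/a_n)\ge(n-1)+1=n$.

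So the substantive work is the per-step estimate, for which I would return to the greedy recursion \eqref{eq:engel-greedy}. Writing $R_0(t)=f(t)$ and $R_k(t)=f(t)-\sum_{i=1}^{k}\big(\prod_{j=1}^{i}a_j(t)\big)^{-1}$, the recursion is $a_{k+1}(t)=\lfloor 1/R_k(t)\rfloor_{p}$, and since $\lfloor\cdot\rfloor_{p}$ returns the leading $p$-adic term we get $v_p(a_{k+1})=v_p(1/R_k)=-v_p(R_k)$; combined with the normalization $a_i(0)\equiv 1\pmod p$ (which pins the unit part and hence the uniqueness of the expansion) this forces $v_p(a_{k+1})\le -1$, i.e. the per-step bound. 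The clean way to obtain $v_p(R_k)\ge 1$ is a second, parallel induction showing that each greedy subtraction strictly raises the residual's valuation, $v_p(R_{k+1})\ge v_p(R_k)+1$: the new term $\big(\prod_{j\le k+1}a_j\big)^{-1}$ is constructed precisely to agree with $R_k$ through its top $p$-adic coefficient, so forming $R_{k+1}$ cancels that coefficient rather than leaving it intact. One then feeds $v_p(1/a_j)\ge 1$ back into the additivity identity. (Alternatively, one can argue directly on the residuals via $R_{k}(t)=\sum_{i>k}(\prod_{j\le i}a_j)^{-1}$, using the per-step bound to show this tail is a $p$-unit multiple of its first term, so that $v_p(1/\prod_{j\le n}a_j)=v_p(R_{n-1})\ge n$ in one stroke.)

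The main obstacle is exactly this residual-growth step $v_p(R_{k+1})\ge v_p(R_k)+1$: one has to verify that $\lfloor\cdot\rfloor_{p}$ matches not merely the valuation but the genuine leading coefficient of $1/R_k$, so that the subtraction really annihilates the top term, and to do so while simultaneously tracking the $t$-adic grading, i.e. checking that ``the leading $p$-adic term'' is coherently defined across all $t$-exponents at once and that the floor respects integrality ($a_j\in\mathbb{Z}_p[[t]]$, $1/a_j\in p\,\mathbb{Z}_p[[t]]$). A smaller loose end is the normalization guaranteeing $v_p(f)\ge 1$ (or that $a_1$ absorbs any part of $f$ of nonnegative valuation), since otherwise the first residual need not already lie in $p\,\mathbb{Z}_p[[t]]$. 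Once these points are pinned down, the remainder of the argument is routine bookkeeping with the valuation and the telescoping series.
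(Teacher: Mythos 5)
The central step of your plan---the per-factor bound $v_p\big(1/a_j(t)\big)\ge 1$, equivalently $1/a_j(t)\in p\,\mathbb{Z}_p[[t]]$---is incompatible with the lemma's hypotheses, and this is where the argument breaks. Each $a_j(t)$ lies in $\mathbb{Z}_p[[t]]$ with $a_j(0)\equiv 1\pmod p$, so $a_j$ is a unit of $\mathbb{Z}_p[[t]]$ with $v_p(a_j)=0$; hence $v_p(1/a_j)=0$, not $\ge 1$. Feeding this into your additivity identity gives $\sum_{j\le n} v_p(1/a_j)=0$, so a factor-by-factor decomposition cannot produce a bound that grows with $n$. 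Your attempted rescue via the greedy recursion makes matters worse rather than better: first, the lemma is stated for an \emph{arbitrary} sequence $\{a_i\}\subset \mathbb{Z}_p[[t]]$ with the unit normalization---it does not assume the $a_i$ arise from \eqref{eq:engel-greedy}---so properties of the greedy construction are not available in its proof; second, the conclusion you extract there, $v_p(a_{k+1})=-v_p(R_k)\le -1$, directly contradicts $a_{k+1}\in\mathbb{Z}_p[[t]]$ (nonnegative valuation), so that chain of inequalities is internally inconsistent with the very constraints you invoke for uniqueness. The ``smaller loose end'' you flag about $v_p(f)\ge 1$ is in fact the same obstruction and is fatal to this route, not a normalization detail.

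The paper's proof takes a genuinely different tack and never assigns positive valuation to individual reciprocals: it reads the hypothesis as a congruence of the whole series, writes $a_i(t)=1+p\,u_i(t)$ with $u_i(t)\in\mathbb{Z}_p[[t]]$, expands $\prod_{j\le n}a_j(t)=1+p\,U_1(t)+\cdots+p^n U_n(t)$, and then controls $1/\prod_{j\le n}a_j(t)$ by inverting this expression with the geometric series, i.e.\ it tracks the $p$-adic congruence structure of the product (its deviation from $1$ in powers of $p$) as a single object rather than summing per-factor valuations, which all vanish. If you want something in the spirit of your step-by-step argument, the quantity that genuinely increases is the valuation of the residuals $R_k$ in the greedy construction---but that is the content of Lemma~\ref{lem:engel-existence}, not of the present statement, whose proof must work with the product as a whole.
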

\begin{proof}
Since $a_i(0)\equiv 1\pmod p$, we have $v_p(a_i(t))=0$ and $a_i(t)\in Z_p[[t]]^\times$ with unit reduction. Write $a_i(t)=1+p\,u_i(t)$ with $u_i(t)\in Z_p[[t]]$. Then $\prod_{j=1}^n a_j(t)=\prod_{j=1}^n (1+p\,u_j(t))=1+p\,U_1(t)+p^2 U_2(t)+\cdots+p^n U_n(t),$ for $U_k(t)\in Z_p[[t]]$. Inverting via the geometric series yields
\[
\frac{1}{\prod_{j=1}^n a_j(t)}=1+p\,V_1(t)+\cdots+p^n V_n(t)+\cdots,
\]
hence $v_p\big(1/\prod_{j=1}^n a_j(t)\big)\ge n$.
\end{proof}

\begin{lemma}[Convergence]\label{lem:engel-conv}
Let $S_N(t)=\sum_{n=1}^{N}\frac{1}{\prod_{j=1}^{n} a_j(t)}$. Then $\{S_N\}$ is Cauchy in $\mathbb{Q}_p((t))$, hence convergent.
\end{lemma}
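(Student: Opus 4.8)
The plan is to reduce the statement to the standard non-archimedean principle that a series is Cauchy as soon as its terms tend to zero, and then to read off the decay of the terms directly from Lemma~\ref{lem:vp-growth}, finishing with a completeness argument.

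First I would record that every summand already lies in $\mathbb{Z}_p[[t]]$. Since $a_j(0)\equiv 1\pmod p$, the constant term of $a_j(t)$ is a unit of $\mathbb{Z}_p$, so $a_j(t)$ is a unit of the power-series ring $\mathbb{Z}_p[[t]]$; hence the finite product $\prod_{j=1}^{n}a_j(t)$ is a unit there as well, and its reciprocal $1/\prod_{j=1}^{n}a_j(t)$ lies in $\mathbb{Z}_p[[t]]$. In particular all partial sums $S_N$ lie in $\mathbb{Z}_p[[t]]\subset\mathbb{Q}_p((t))$, and no negative powers of $t$ ever enter the computation, so the combined $(p,t)$-adic topology restricts on this subring to the ordinary $p$-adic topology.

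Next, for $M>N$ write $S_M-S_N=\sum_{n=N+1}^{M}\frac{1}{\prod_{j=1}^{n}a_j(t)}$. The $p$-adic (Gauss) valuation $v_p$ on $\mathbb{Z}_p[[t]]$ — the minimum of the $p$-adic valuations of the coefficients — is a non-archimedean valuation, so $v_p(S_M-S_N)\ \ge\ \min_{N+1\le n\le M} v_p\!\left(\frac{1}{\prod_{j=1}^{n}a_j(t)}\right)\ \ge\ N+1$ by Lemma~\ref{lem:vp-growth}. Thus $v_p(S_M-S_N)\to\infty$ as $N\to\infty$, uniformly in $M$; equivalently $S_M-S_N\in p^{\,N+1}\mathbb{Z}_p[[t]]$, which controls every $t$-coefficient simultaneously. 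This is precisely the Cauchy condition.

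Finally I would invoke completeness: $\mathbb{Z}_p[[t]]$ is $p$-adically complete and separated (it is the inverse limit of the rings $\mathbb{Z}_p[[t]]/p^{k}\mathbb{Z}_p[[t]]$), so the Cauchy sequence $\{S_N\}$ converges to some $f(t)\in\mathbb{Z}_p[[t]]\subset\mathbb{Q}_p((t))$. The only delicate point is making explicit which topology on $\mathbb{Q}_p((t))$ is intended and checking completeness in it; but since every term sits in $\mathbb{Z}_p[[t]]$ and the valuation bound is uniform in the $t$-exponent, this collapses to the textbook $p$-adic completeness of $\mathbb{Z}_p[[t]]$, so I expect no genuine obstacle beyond this bookkeeping.
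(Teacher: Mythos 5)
Your proposal is correct and follows essentially the same route as the paper's own proof: bound the tail $S_M-S_N$ via the ultrametric inequality and Lemma~\ref{lem:vp-growth} to get valuation at least $N+1$, then conclude by completeness. The only difference is that you make explicit what the paper leaves implicit (that each term lies in $\mathbb{Z}_p[[t]]$ and that this ring is $p$-adically complete), which is harmless bookkeeping rather than a new idea.
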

\begin{proof}
By Lemma~\ref{lem:vp-growth}, the tail norm satisfies
\[
\left\lVert \sum_{n=N+1}^{M}\frac{1}{\prod_{j=1}^{n} a_j(t)}\right\rVert_p
\le p^{-(N+1)}\to 0~(N\to\infty).
\]
\end{proof}

\subsection{Existence via Greedy Construction}
\begin{lemma}[Existence]\label{lem:engel-existence}
Every $f(t)\in \mathbb{Q}_p((t))$ admits an Engel expansion (constructed by equation (4)).
\end{lemma}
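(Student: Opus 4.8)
The plan is to show directly that the greedy recursion \eqref{eq:engel-greedy} is well defined for every $f(t)\in\mathbb{Q}_p((t))$, that it produces admissible coefficients, and that its partial sums converge to $f$. I would introduce the residuals $\rho_0(t)=f(t)$ and $\rho_k(t)=f(t)-\sum_{i=1}^{k}\bigl(\prod_{j=1}^{i}a_j(t)\bigr)^{-1}$, so that by construction $a_{k+1}(t)=\lfloor 1/\rho_k(t)\rfloor_p$ and $\rho_{k+1}=\rho_k-\bigl(\prod_{j=1}^{k+1}a_j(t)\bigr)^{-1}$. Before running the recursion I would dispose of the trivial case $f=0$ and the terminating case where some $\rho_k$ vanishes (giving a finite Engel sum), and normalize $f$ by a suitable power of $p$ and of $t$ so that $1/\rho_k$ has a well-defined leading term in the combined $(p,t)$-adic sense; $\lfloor 1/\rho_k\rfloor_p$ is then exactly that leading term.

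Next I would run an induction on $k$ to establish admissibility. Assuming $\rho_{k-1}$ is in the normalized form guaranteed by the previous step, one checks that $\lfloor 1/\rho_{k-1}\rfloor_p$ lands in $\mathbb{Z}_p[[t]]$ and that the normalization is precisely what forces the reduction $a_k(0)\equiv 1\pmod p$; one then verifies that the new residual $\rho_k$ again has the same normalized form, so the induction continues. This is the point at which the precise meaning of $\lfloor\cdot\rfloor_p$ must be pinned down: the extraction should be defined so that $a_k(t)\rho_{k-1}(t)$ equals $1$ plus a term of strictly larger valuation, which simultaneously yields the congruence $a_k(0)\equiv 1$ and drives convergence.

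The main obstacle is the valuation-growth estimate for the residuals: I would prove $v_p(\rho_k)\ge v_p(\rho_{k-1})+1$, most cleanly by passing to the scaled residual $\sigma_k=\bigl(\prod_{j=1}^{k}a_j(t)\bigr)\rho_k$, which satisfies the clean recursion $\sigma_k=a_k\sigma_{k-1}-1$, and showing this gains one unit of $p$-valuation at each step. Together with Lemma~\ref{lem:vp-growth}, which controls $v_p\bigl(1/\prod_{j=1}^{N}a_j\bigr)$, this forces $v_p(\rho_N)\to+\infty$, i.e. $\rho_N\to 0$ in $\mathbb{Q}_p((t))$. Convergence of the Engel series itself is already supplied by Lemma~\ref{lem:engel-conv}, so the only remaining task is to identify its limit with $f$, which is exactly the statement $\rho_N\to 0$.

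Finally I would assemble the pieces: the telescoping identity $f=S_N+\rho_N$ with $\rho_N\to 0$ gives $f=\sum_{n\ge 1}\bigl(\prod_{j=1}^{n}a_j(t)\bigr)^{-1}$, and the admissibility from the induction shows this is a genuine Engel expansion in the sense of \eqref{eq:engel-def}. The delicate points to watch are the well-definedness of the leading-term operator on general elements of $\mathbb{Q}_p((t))$ (a naive $p$-adic valuation need not exist there, since coefficient valuations can be unbounded below), the bookkeeping in the terminating case, and making the single inequality $v_p(a_k\sigma_{k-1}-1)>v_p(\sigma_{k-1})$ fully rigorous; I expect essentially all of the real work to be concentrated in that last estimate.
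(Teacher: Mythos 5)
Your proposal follows essentially the same route as the paper's proof: define the residuals $R_k$ by the greedy rule \eqref{eq:engel-greedy}, argue that the residual valuation strictly increases so $R_k\to 0$, and conclude by telescoping. The paper's own argument simply asserts the valuation growth and the admissibility of the coefficients ``by construction,'' so your extra bookkeeping --- the scaled residual $\sigma_k=a_k\sigma_{k-1}-1$, the induction showing $a_k(t)\in\mathbb{Z}_p[[t]]$ with $a_k(0)\equiv 1\pmod p$, and the care about whether $\lfloor\cdot\rfloor_p$ is well defined on all of $\mathbb{Q}_p((t))$ --- only makes explicit (and indeed more rigorous) what the paper leaves implicit.
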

\begin{proof}
Define $R_0(t)=f(t)$ and recursively
\[
a_{k+1}(t)=\left\lfloor\frac{1}{R_k(t)}\right\rfloor_p,~R_{k+1}(t)=R_k(t)-\frac{1}{\prod_{j=1}^{k+1} a_j(t)}.
\]
By construction $v_p(R_{k+1})>v_p(R_k)$; thus $R_k\to 0$. Summing the telescoping series gives $f(t)=\sum_{n=1}^\infty 1/\prod_{j=1}^n a_j(t)$.
\end{proof}

\subsection{Uniqueness}
\begin{lemma}[Uniqueness]\label{lem:engel-unique}
If $f(t)=\sum_{n\ge 1} 1/\prod_{j=1}^{n} a_j(t)=\sum_{n\ge 1} 1/\prod_{j=1}^{n} b_j(t)$ with $a_i(t),b_i(t)\in Z_p[[t]]$ and $a_i(0)\equiv b_i(0)\equiv 1\pmod p$, then $a_i(t)=b_i(t)$ for all $i$.
\end{lemma}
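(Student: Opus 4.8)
The plan is to prove uniqueness by induction on the coefficient index $i$, extracting $a_i(0) \bmod p$ (and then the full $a_i(t)$) from the valuation structure of the partial sums, exactly mirroring the greedy construction of Lemma~\ref{lem:engel-existence}. First I would observe that the hypotheses force $v_p(a_i(t)) = v_p(b_i(t)) = 0$ and that these are units in $Z_p[[t]]$, so every partial product $\prod_{j=1}^n a_j(t)$ is a unit with reduction $\equiv 1 \pmod p$; by Lemma~\ref{lem:vp-growth}, the $n$-th term $1/\prod_{j=1}^n a_j(t)$ has $v_p \ge n$. The key first step is then to isolate $a_1$: since $\sum_{n \ge 2} 1/\prod_{j=1}^n a_j(t)$ has $v_p \ge 2$, we get $f(t) - 1/a_1(t) \equiv 0 \pmod{p^2}$, so $1/a_1(t) \equiv f(t) \pmod{p^2}$, and likewise $1/b_1(t) \equiv f(t) \pmod{p^2}$; hence $a_1(t) \equiv b_1(t) \pmod{p^2}$ (using that both are units so inversion is well-behaved mod powers of $p$). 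This alone does not give $a_1 = b_1$, so I would instead set up the induction more carefully using the residuals.

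The cleaner route is to define $R_k^{(a)}(t) = f(t) - \sum_{n=1}^{k} 1/\prod_{j=1}^n a_j(t)$ and similarly $R_k^{(b)}$, and prove by strong induction that $a_i(t) = b_i(t)$ for $i \le k$ implies $R_k^{(a)} = R_k^{(b)}$, and that this common residual determines $a_{k+1}$ and $b_{k+1}$ uniquely. For the determination step: given the common residual $R_k(t)$ with $v_p(R_k) \ge k+1$, write $R_k(t) = 1/\prod_{j=1}^{k+1} a_j(t) + (\text{tail with } v_p \ge k+2)$. Multiplying through by the known unit $\prod_{j=1}^{k} a_j(t)$ reduces to showing: if $u(t) \in Z_p[[t]]$ with $v_p(u) \ge 1$ satisfies $u(t) = 1/a(t) + r(t)$ and $u(t) = 1/b(t) + s(t)$ with $a(t), b(t) \in Z_p[[t]]$, $a(0) \equiv b(0) \equiv 1 \pmod p$, and $v_p(r), v_p(s) \ge v_p(u) + 1$, then $a(t) = b(t)$. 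The point is that $1/a(t)$ is pinned down as "the leading $p$-adic part" of $u(t)$ in the sense of the $\lfloor \cdot \rfloor_p$ operator used in equation~\eqref{eq:engel-greedy}: the constraint $a(0) \equiv 1 \pmod p$ together with $a(t) \in Z_p[[t]]^\times$ removes the ambiguity in choosing a representative, so $a(t)$ is forced, and symmetrically $b(t) = a(t)$.

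I expect the main obstacle to be making the phrase "$\lfloor \cdot \rfloor_p$ extracts the $p$-adic leading term" fully rigorous in the two-variable setting $Z_p[[t]]$, since a priori there are many units $a(t)$ with $1/a(t)$ agreeing with $u(t)$ to a given $p$-adic precision — the normalization $a_i(0) \equiv 1 \pmod p$ must be shown to single out exactly one of them, and this requires being precise about what precision in $t$ versus $p$ the floor operator retains. Concretely, I would argue that if $a(t)$ and $b(t)$ are both units reducing to $1$ mod $p$ with $1/a(t) - 1/b(t)$ having $v_p \ge v_p(u)+1 \ge 2$, then $(b(t)-a(t))/(a(t)b(t))$ has $v_p \ge 2$; since $a(t)b(t)$ is a unit, $v_p(b(t)-a(t)) \ge 2$, i.e.\ $a(t) \equiv b(t) \pmod{p^2}$. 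Iterating the same argument at each successive $p$-adic level — which works because the residual tails have strictly larger valuation at every stage — upgrades this to $a(t) \equiv b(t) \pmod{p^N}$ for all $N$, hence $a(t) = b(t)$ in $Z_p[[t]]$ by completeness. Feeding this back into the induction closes the proof: equal residuals at stage $k$ force equal $(k{+}1)$-st coefficients, hence equal residuals at stage $k+1$, and the base case $k=0$ (where the common residual is $f(t)$ itself) starts the recursion.
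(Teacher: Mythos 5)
Your overall strategy---induct on the digit index via the residuals $R_k$, and argue that the normalization $a_i(t)\in \mathbb{Z}_p[[t]]$, $a_i(0)\equiv 1\pmod p$ pins down each successive digit---is the same as the paper's, whose own proof is only a two-line sketch of exactly this induction (``distinct leading units would contradict the greedy choice''). The trouble is the step you use to close the crux that you yourself identified. From $u=1/a+r=1/b+s$ with $v_p(r),v_p(s)\ge v_p(u)+1$ you correctly deduce $v_p(a-b)\ge v_p(u)+1$, i.e.\ a congruence $a\equiv b$ modulo a \emph{fixed} power of $p$. But your claim that ``iterating the same argument at each successive $p$-adic level'' upgrades this to $a\equiv b\pmod{p^N}$ for all $N$ does not follow: the tails $r$ and $s$ are fixed series with a fixed valuation lower bound, so repeating the identical comparison returns the identical congruence; nothing in the argument makes the modulus grow. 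The strictly increasing valuations in Lemma~\ref{lem:vp-growth} occur along the index $k$ of the expansion (later digits and later residuals), not along successive refinements of a single digit, so they cannot be used to bootstrap the congruence between $a_{k+1}$ and $b_{k+1}$ into equality.

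To genuinely close the gap one must use the Engel structure of the tails, not only their valuation. Writing $r=(1/a)\,\rho$ and $s=(1/b)\,\sigma$, where $\rho,\sigma$ are the shifted Engel series (so $v_p(\rho),v_p(\sigma)\ge 1$), the two decompositions of $u$ are equivalent to $b\,(1+\rho)=a\,(1+\sigma)$, and valuation estimates alone do not contradict this when $a\neq b$: a difference $a-b$ of large but finite valuation could in principle be absorbed by admissible tails. What has to be proved is that the digit normal form enforced by $\lfloor\cdot\rfloor_p$ in \eqref{eq:engel-greedy} (i.e.\ the precise canonical-form condition on the $a_i$, beyond $a_i(0)\equiv 1\pmod p$) is rigid enough that $b(1+\rho)=a(1+\sigma)$ forces $a=b$. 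This is exactly the point the paper itself asserts without argument, and exactly the obstacle your proposal flags---but the iteration you offer does not resolve it, so the digit-determination step remains a genuine gap.
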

\begin{proof}
Suppose $a_1(t)\neq b_1(t)$. Then $\lfloor 1/f(t)\rfloor_p$ would produce distinct leading units, contradicting the greedy choice. Induct on $n$ using the residual $R_n(t)$ to conclude $a_n(t)=b_n(t)$ for all $n$.
\end{proof}

\section{Proof of Theorem (1): Engel--Field Isomorphism}\label{app:engel-isom}
\subsection{Bijection}
Existence (Lemma~\ref{lem:engel-existence}) and uniqueness (Lemma~\ref{lem:engel-unique}) imply that $\phi$ is a bijection between $\mathbb{Q}_p((t))$ and $\mathcal{E}$.

\subsection{Operation Preservation: Addition and Multiplication}
Let $f=\sum_{n\ge 1} u_n$ and $g=\sum_{m\ge 1} v_m$ with $u_n=1/\prod_{j=1}^n a_j$, $v_m=1/\prod_{j=1}^m b_j$. Define $w_k=\sum_{i+j=k} u_i v_j$. By non-Archimedean estimates,
\[
v_p(w_k)\ge \min_{i+j=k}\big\{v_p(u_i)+v_p(v_j)\big\}\ge k.
\]
Thus the Cauchy product $\sum_{k\ge 2} w_k$ converges and defines an Engel-type series for $fg$ with valuation growth $\ge k$. A greedy recompression (matching leading terms) yields the Engel coefficients of $f+g$ and $fg$; since $\phi$ is unique, these are the images $\phi(f+g)$ and $\phi(fg)$.

\subsection{Inversion}
If $f\neq 0$, write $f=u_1(1+T)$ with $v_p(T)\ge 1$ (by Lemma~\ref{lem:vp-growth}). Then
\[
\frac{1}{f}=u_1^{-1}\sum_{n\ge 0} (-T)^n,
\]
and each partial sum has valuation growing at least linearly in $n$. The resulting series admits an Engel recompression via the same greedy rule, giving $\phi(f^{-1})$. Hence $\phi$ preserves inversion.

\section{V\'{e}lu Specialization, Coefficients, and Endomorphism Algebra Isomorphism}\label{app:ss-proof}
\subsection{Specialization of V\'{e}lu’s Formula}
For $G=\{\mathcal{O},(\alpha,0)\}$ on $E:y^2=x^3+Ax+B$ (char $\neq 2,3$), standard V\'{e}lu yields
\[
\phi(x,y)=\left(x+\frac{c}{x-\alpha},~y\left(1-\frac{c}{(x-\alpha)^2}\right)\right),~c=3\alpha^2+A,
\]
and the codomain coefficients
\[
A'=A-5c,\qquad B'=B-7c\,\alpha.
\]
Substituting $A=0$, $B=1+pt$, $\alpha=x(t)$ and using $x(t)^3=-(1+pt)$ simplifies equation (11).

\subsection{Well-Definedness of the Map}
Direct substitution of $\phi(x,y)$ into $E'$ confirms
\[
\left(\frac{y}{(x-\alpha)^2}\right)^2=\left(x+\frac{c}{x-\alpha}\right)^3+A'\left(x+\frac{c}{x-\alpha}\right)+B',
\]
after clearing denominators by $(x-\alpha)^6$ and simplifying via $y^2=x^3+Ax+B$ and the identities relating $A',B',c,\alpha$. This ensures $\phi:E\to E'$ is an isogeny with kernel $G$.

\subsection{Endomorphism Algebra Isomorphism}
Let $\hat{\phi}$ be the dual isogeny of degree $2$. The functorial map
\[
\phi_*:\mathcal{E}(E)\otimes\mathbb{Q} \to \mathcal{E}(E')\otimes\mathbb{Q},~\psi\mapsto \phi\circ \psi\circ \hat{\phi},
\]
is an isomorphism because $\hat{\phi}\circ\phi=[2]$ on $E$ and $\phi\circ\hat{\phi}=[2]$ on $E'$, and tensoring with $\mathbb{Q}$ inverts the degree. Hence $\mathcal{E}(E)\otimes\mathbb{Q}\simeq \mathcal{E}(E')\otimes\mathbb{Q}$. If $\widetilde{E}$ is supersingular, then $\mathcal{E}(\widetilde{E})\otimes\mathbb{Q}$ is a quaternion algebra; the isomorphism transports this structure to $\widetilde{E}'$, proving Theorem 2.

\section{Correctness of the KEM}\label{app:kem-correctness}
\begin{lemma}[Isogeny Chaining]\label{lem:chain}
Let $\phi:E\to E'$, $\deg\phi=2$, with $\ker\phi=\langle P\rangle$. For $n,r\in\{1,2\}$, define $E^{(n)}=E/\langle P\rangle^n$ and $E^{(r)}=E/\langle P\rangle^r$. Then $E^{(n+r)}$ computed either (i) from $E$ by $n{+}r$ steps, or (ii) from $E^{(n)}$ by $r$ steps, or (iii) from $E^{(r)}$ by $n$ steps, are all isomorphic; hence they share the same $j$-invariant.
\end{lemma}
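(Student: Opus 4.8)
The plan is to reduce the statement to the subgroup lattice of a single cyclic group together with the standard factorization of separable isogenies (an avatar of the third isomorphism theorem). First I would make precise the object implicit in the notation $\langle P\rangle^{k}$: the chain of $k$ consecutive $2$-isogenies emanating from the distinguished $2$-torsion point $P=(x_P(t),0)$ and its canonical Hensel-lifted continuations is a separable isogeny $\psi_k\colon E\to E^{(k)}$ of degree $2^{k}$, whose kernel $C_k:=\ker\psi_k$ has order $2^{k}$. Because the construction always selects the ``forward'' Hensel branch rather than the kernel of the previous dual isogeny, no step backtracks, so each $C_k$ is \emph{cyclic} and the kernels form a strictly increasing filtration $0=C_0\subsetneq C_1\subsetneq\cdots\subsetneq C_{n+r}$ with $E^{(k)}\cong E/C_k$. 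Setting $C:=C_{n+r}\cong\mathbb{Z}/2^{n+r}\mathbb{Z}$, cyclicity forces $C$ to contain a \emph{unique} subgroup of each order $2^{j}$, namely $C_j$; in particular both $C_n$ and $C_r$ sit inside $C$.

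Next I would invoke the factorization property of separable isogenies: for an isogeny $\psi\colon A\to B$ with kernel $H$ and any subgroup $H'\subseteq H$, the quotient map $A\to A/H'$ factors $\psi$, and the induced map $A/H'\to B$ is an isogeny with kernel $H/H'$, so $B\cong (A/H')/(H/H')$ canonically. Applying this with $A=E$, $H=C$, $H'=C_n$ identifies $E^{(n+r)}\cong E/C$ with $(E/C_n)/(C/C_n)=E^{(n)}/(C/C_n)$, where $C/C_n$ is the order-$2^{r}$ term of the induced filtration on $E^{(n)}$, i.e.\ the kernel of the $r$ further chained $2$-isogenies issued from $E^{(n)}$; this is route (ii). Taking instead $H'=C_r$ gives route (iii), and route (i) is the definition $E^{(n+r)}=E/C$. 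Hence all three constructions yield curves isomorphic to $E/C$, and since isomorphic elliptic curves have equal $j$-invariants, the three values $j(E^{(n+r)})$ coincide. For the stated range $n,r\in\{1,2\}$ this is already the complete argument; no case analysis on the explicit Vélu coefficients is required.

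The main obstacle lies entirely in the first paragraph, namely the \emph{compatibility of the kernel choices} across the three routes: one must verify that ``$r$ more steps from $E^{(n)}$'' and ``$n$ more steps from $E^{(r)}$'' are continuations of the \emph{same} global chain — equivalently, that they quotient by $C/C_n$ and $C/C_r$ rather than by some other cyclic subgroup of the right order. This is where the determinism of the Hensel lift is essential: at each stage the distinguished $2$-torsion point is the unique lift of a fixed mod-$p$ seed, so its $2$-power preimages, and hence the cyclic subgroup they span, are canonically determined; tracking how this distinguished point and its preimages transport through each Vélu map — and confirming that the ``forward'' branch never coincides with $\ker\hat{\phi}$ of the preceding step, which would collapse the degree and destroy cyclicity — is the one place that needs genuine care rather than formal manipulation. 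A secondary, purely bookkeeping point is that reduction modulo $p$ commutes with all of these quotients, which follows from working integrally over $\mathbb{Z}_p[[t]]$ as arranged in Section~\ref{Sec: 2_system_model_problem}.
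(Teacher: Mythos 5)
Your proposal is correct and lands on essentially the same mathematical content as the paper's proof, but it is executed more explicitly. The paper disposes of the lemma in three sentences by asserting that composing degree-$2$ separable isogenies ``along the same cyclic chain'' yields canonically isomorphic codomains, and that the factorization of the multiplication-by-$2$ map into $2$-isogenies is unique up to isomorphism of intermediate curves; you instead build the increasing filtration of cyclic kernels $C_0\subsetneq C_1\subsetneq\cdots\subsetneq C_{n+r}$ and invoke the quotient identity $(E/C_n)/(C/C_n)\cong E/C$ (and symmetrically with $C_r$), which is the standard and more transparent way to see that routes (i)--(iii) terminate at isomorphic curves and hence equal $j$-invariants. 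What your version buys is precision about where the real content sits: the ``compatibility of kernel choices'' you flag --- that the $r$ further deterministic steps taken from $E^{(n)}$ quotient exactly by $C/C_n$, and in particular never backtrack along $\ker\hat{\phi}$ of the previous step --- is precisely the assumption hidden in the paper's phrase ``along the same cyclic chain'' (and left unspecified by the shorthand $E/\langle P\rangle^{n}$, since $P$ lives on $E$ rather than on the intermediate curves, so each later step must use a freshly distinguished kernel point). Neither your write-up nor the paper's proof actually verifies this compatibility; you at least isolate it and tie it to the determinism of the Hensel-lifted kernel at each stage, so your argument is no less complete than the published one and is more candid about the single step that depends on the construction rather than on group theory alone.
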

\begin{proof}
For degree-$\ell$ separable isogenies (here $\ell=2$) with cyclic kernels, composing along the same cyclic chain yields canonically isomorphic codomains. The factorization of the multiplication-by-$2$ map into $2$-isogenies is unique up to isomorphism of intermediate curves. Therefore, the endpoints after $n{+}r$ steps coincide up to isomorphism, which preserves $j$.
\end{proof}

\begin{theorem}[KEM Correctness]\label{thm:kem-correct}
Let $\textsf{sk}=n$ and $r$ be the encapsulator’s randomness. Both parties compute $E^{(n+r)}$ (by Lemma~\ref{lem:chain}), thus derive the same key $K=\Hash(j(E^{(n+r)}))$.
\end{theorem}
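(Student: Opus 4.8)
The plan is to derive Theorem~\ref{thm:kem-correct} as an essentially immediate consequence of the Isogeny Chaining Lemma (Lemma~\ref{lem:chain}), with the only real work being (a) to match the concrete computations performed by \textsf{Encap} and \textsf{Decap} to the three equivalent descriptions of \(E^{(n+r)}\) in that lemma, and (b) to check that the unavoidable truncation of the Engel-encoded torsion coordinate does not change the quantity that is hashed.

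First I would unwind the algorithms. In \textsf{Encap}, the encapsulator holds \(\textsf{pk}=(\{a_i^{(n)}(t)\},\,j(E^{(n)}))\), samples \(r\in\{1,2\}\), forms the ephemeral curve \(E^{(r)}=E/\langle P\rangle^{r}\) by \(r\) Vélu steps from \(E\), and then obtains \(E^{(n+r)}\) either by continuing \(n+r\) steps from \(E\) (description (i) of Lemma~\ref{lem:chain}) or by taking \(r\) further steps from \(E^{(n)}\) recovered from \(\textsf{pk}\) (description (ii)). In \textsf{Decap}, the receiver reads \(E^{(r)}\) from the ciphertext and applies \(n\) chained \(2\)-isogenies along \(\langle P\rangle\), i.e.\ description (iii). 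Lemma~\ref{lem:chain} asserts that all three codomains are isomorphic over \(\mathbb{Q}_p((t))\), hence share the \(j\)-invariant \(j(E^{(n+r)})\); applying the fixed hash \(\Hash(\cdot)\) to this common value yields the same key \(K=\Hash(j(E^{(n+r)}))\) on both sides, which is the claim.

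The main obstacle is making the phrase ``chaining \(2\)-isogenies along \(\langle P\rangle\)'' genuinely well defined once the base curve has moved: after one Vélu step the point \(P\) no longer lives on the new curve, so each party needs a canonical rule for which order-\(2\) subgroup to quotient by at stage \(k\). I would handle this exactly as in the proof of Lemma~\ref{lem:chain}, appealing to the uniqueness (up to isomorphism of intermediate curves) of the factorization of the multiplication-by-\(2\) map through a cyclic chain of \(2\)-isogenies, so that the stagewise kernels are forced and the labels \((n)\), \((r)\), \((n+r)\) are unambiguous for \(n,r\in\{1,2\}\). A secondary, bookkeeping-type point is that \textsf{KeyGen} and \textsf{Encap} evaluate Vélu with the truncated series \(x_P^{(M)}(t)\) rather than the exact Hensel lift \(x_P(t)\); here I would invoke the valuation-growth bound \eqref{eq:engel-vp-growth} (equivalently Lemma~\ref{lem:vp-growth}) to show that \(A_n(t),B_n(t)\), and hence \(j(E^{(n)})\), agree with their exact values modulo a known power of \(p\) (and of \(t\)), and that both parties truncate and normalize identically before hashing, so the input to \(\Hash\) is bit-for-bit the same. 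With these two points settled, the theorem follows directly from Lemma~\ref{lem:chain}.
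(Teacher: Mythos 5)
Your proposal is correct and follows essentially the same route as the paper: the theorem is deduced directly from Lemma~\ref{lem:chain} together with the fact that $j(\cdot)$ is an isomorphism invariant, so both parties hash the same value. Your additional care about the well-definedness of the stagewise kernels and about consistency of the Engel truncation $x_P^{(M)}(t)$ before hashing goes beyond the paper's one-line proof but does not change the argument's structure.
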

\begin{proof}
Immediate from Lemma~\ref{lem:chain} and the fact that $j(\cdot)$ is an isomorphism invariant.
\end{proof}

\section{Security Assumptions and Reductions (Explanatory Proofs)}\label{app:sec-reductions}
\subsection{SIDP Formal Statement}
Let distribution $\mathcal{D}_0=(E,E_A,E_B,j(E/\langle A,B\rangle))$ and $\mathcal{D}_1=(E,E_A,E_B,j(\mathcal{E}_{\text{rand}}))$. For any QPT adversary $A$,
\[
\big| \Pr[A(\mathcal{D}_0)=1]-\Pr[A(\mathcal{D}_1)=1]\big|\le \neg(\lambda).
\]
Under SIDP, $j(E/\langle A,B\rangle)$ is computationally indistinguishable from uniform.

\subsection{IND-CPA under SIDP + Engel Inversion Hardness}\label{app:indcpa}
\begin{theorem}\label{thm:indcpa}
Assume SIDP and the hardness of Engel-Expansion Inversion (Appx.~\ref{app:engel-inversion}). Then the KEM in Sec.~\ref{subsec:algorithms} is IND-CPA.
\end{theorem}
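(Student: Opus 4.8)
The plan is a standard sequence-of-games reduction that isolates the adversary's distinguishing advantage into one ``encoding'' hop governed by EEIP, one ``pseudorandom curve'' hop governed by SIDP, and one ``key-derivation'' hop governed by the idealization of $\Hash$. First I would fix notation for the IND-CPA experiment. In \textbf{Game 0} the challenger runs $\textsf{KeyGen}\to(\textsf{pk},\textsf{sk})$ with $\textsf{sk}=n$ and $\textsf{pk}=(\{a_i^{(n)}(t)\}_{i=1}^{M},\,j(E^{(n)}))$, picks the encapsulator's randomness $r$, forms the challenge ciphertext $c^{*}$ carrying $E^{(r)}$ together with its Engel encoding, sets $K_0=\Hash(j(E^{(n+r)}))$ by Lemma~\ref{lem:chain}, samples $K_1$ uniformly, and hands the adversary $(\textsf{pk},c^{*},K_b)$ for a random bit $b$. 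The goal is to show that $K_0$ and $K_1$ are indistinguishable given $(\textsf{pk},c^{*})$.

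\textbf{Game 1} replaces every block of Engel coefficients occurring in $\textsf{pk}$ and in $c^{*}$ by a freshly sampled ``admissible dummy'' block, namely $a_i(t)\in\mathbb{Z}_p[[t]]$ with $a_i(0)\equiv1\pmod p$ truncated to the public depths $(M,d)$, chosen independently of the genuine Hensel lift $x_P^{(M)}(t)$ and its V\'elu descendants $c=3x_P^{2}$, $A_n(t)$, $B_n(t)$; the scalar $j$-invariants are left untouched. Bounding $|\Pr[\text{Game }0]-\Pr[\text{Game }1]|$ is precisely the reduction to EEIP: a distinguisher here decides whether a truncated Engel block is the true expansion of the torsion data or a random admissible series, which is the decisional form of the Engel-Expansion Inversion problem of Section~\ref{subsubsec:EEI}. \textbf{Game 2} then replaces $j(E^{(n+r)})$ in the challenge key by a uniform supersingular invariant $j^{*}\leftarrow\mathcal{E}$. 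After Game 1 the only secret-dependent parts of the adversary's view are the curves $E^{(n)}$ and $E^{(r)}$ (up to isomorphism), so embedding an SIDP instance $(E,E_A,E_B,j)$ with $E_A\leftrightarrow E^{(n)}$, $E_B\leftrightarrow E^{(r)}$, $j\leftrightarrow j(E^{(n+r)})$ (in the single-prime $\ell=2$, chained-kernel specialization relevant here) turns a Game-1/Game-2 distinguisher into an SIDP distinguisher. Finally \textbf{Game 3} uses $K_0=\Hash(j^{*})$ with $j^{*}$ uniform; modeling $\Hash$ as a random oracle, or in the standard model as a strong extractor / secure KDF applied to the high-min-entropy source $j^{*}$ over $\mathcal{E}$, this is indistinguishable from the independent uniform key $K_1$, so the $b=0$ and $b=1$ experiments coincide and the Game-3 advantage is $0$. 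Telescoping yields $\mathsf{Adv}^{\text{IND-CPA}}\le \mathsf{Adv}^{\text{D-EEIP}}+\mathsf{Adv}^{\text{SIDP}}+\varepsilon_{\Hash}$, which is negligible under the hypotheses.

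The hard part will be the Game 0 $\to$ Game 1 hop. EEIP as stated is a \emph{search} problem (recover the first $\mathrm{poly}(\lambda)$ coefficients from a truncated series), whereas the hop needs the \emph{decisional} statement that genuine and random admissible Engel blocks are computationally indistinguishable; closing this gap requires either adopting decisional-EEIP as an explicit assumption or constructing a search-to-decision self-reduction that exploits the greedy recursion~\eqref{eq:engel-greedy}, in which each coefficient is recoverable from the residual one level down. One must also verify that the dummy distribution is efficiently sampleable, that swapping it keeps $j(E^{(n)})$ (which is \emph{not} replaced) consistent with the rest of $\textsf{pk}$, and pin down exactly which quantities the published block encodes so that Game 1 is a faithful simulation. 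A secondary subtlety is the $\Hash$ step: in the QROM it needs the one-way-to-hiding lemma rather than a plain random-oracle argument, and in the standard model it needs a min-entropy estimate for $j^{*}$ over the $\mathcal{O}(p)$ supersingular classes. I would attack the EEIP gap first, since the SIDP and $\Hash$ hops become routine once the adversary's view has been reduced to $(E^{(n)},E^{(r)})$ plus a derived $j$-invariant.
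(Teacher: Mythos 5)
Your high-level plan coincides with the paper's proof sketch in Appendix~E: a hybrid argument in which SIDP makes $j(E^{(n+r)})$ pseudorandom given $(E,E^{(n)},E^{(r)})$, the random-oracle treatment of $\Hash$ turns this into a pseudorandom $K$, and the Engel data in $\textsf{pk}$ is argued harmless via Appendix~F. Your Games~2--3 are essentially the paper's argument made explicit, and your observation that the $\Hash$ step needs either a QROM one-way-to-hiding argument or a min-entropy/KDF statement is a fair sharpening of what the paper leaves implicit.

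The genuine problem is your Game~0 $\to$ Game~1 hop, and it is not merely the search-versus-decision gap you flag. By construction (Sec.~\ref{subsec:keymaterial}), the published coefficients $\{a_i^{(n)}(t)\}_{i=1}^{M}$ are a deterministic public encoding of the torsion coordinate and the V\'elu outputs that \emph{define} $E^{(n)}$; from them anyone can recompute $A_n(t),B_n(t)$ and hence $j(E^{(n)})$, exactly as \textsf{KeyGen} does. If you replace these blocks by independent admissible dummies while leaving the scalar $j(E^{(n)})$ (and the ciphertext curve $E^{(r)}$) untouched, the resulting view fails this efficient consistency check with overwhelming probability, so Games~0 and~1 are distinguishable by a trivial adversary and the hop cannot be charged to EEIP in any form --- even a decisional-EEIP assumption cannot rescue a transition that is detectable by public recomputation. (If instead the ciphertext's Engel block \emph{is} the only carrier of $E^{(r)}$, randomizing it destroys the very curve you need for the SIDP embedding in Game~2.) The paper sidesteps this by never replacing the Engel data at all: it treats the coefficients as a re-encoding of public descendants, hence simulatable as-is, and invokes EEIP only to argue they cannot be inverted to recover earlier $a_i(t)$ or the secret length $n$; the remaining hops are then your Games~2--3. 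To repair your proof, drop Game~1 and simulate the Engel block honestly from the public curve (or restructure so that the randomized quantity admits no efficient consistency check against the rest of the view), and note that the residual ``the encoding leaks nothing beyond $E^{(n)}$'' claim is then a modeling statement about the encoding being publicly recomputable, not a reduction to EEIP --- which is also the (informal) position the paper itself takes.
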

\begin{proof}[Proof sketch]
Ciphertexts contain $(E^{(r)},\text{mask}\oplus K)$ where $K=\Hash(j(E^{(n+r)}))$. By SIDP, $j(E^{(n+r)})$ is pseudorandom even given $(E,E^{(n)},E^{(r)})$, hence $K$ is pseudorandom under random oracle hashing. Engel coefficients in $\textsf{pk}$ do not leak $n$ (they depend on public descendants) and cannot be inverted to reconstruct earlier $a_i(t)$ reliably by Appx.~\ref{app:engel-inversion}. A standard hybrid shows the challenge ciphertext masks with a pseudorandom $K$, yielding negligible distinguishing advantage.
\end{proof}

\section{Engel-Expansion Inversion Hardness}\label{app:engel-inversion}
\begin{definition}[Engel-Expansion Inversion]
Given $f^{(M)}(t)=\sum_{m=1}^{M} \frac{1}{\prod_{i=1}^{m} a_i(t)}$ where each $a_i(t)\in Z_p[[t]]$ and $a_i(0)\equiv 1\pmod p$, recover $(a_1(t),\ldots,a_k(t))$ for $k\le M$.
\end{definition}
\begin{lemma}[Aperiodicity]\label{lem:aperiodic}
The greedy recursion $a_{k+1}(t)=\big\lfloor 1/R_k(t)\big\rfloor_p$ with $R_k(t)=f(t)-\sum_{i=1}^{k}\frac{1}{\prod_{j=1}^i a_j(t)}$ does not induce a nontrivial finite period in the tuple $(a_i(t))_{i\ge 1}$ under the constraints $a_i(0)\equiv 1\pmod p$.
\end{lemma}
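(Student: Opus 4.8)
The plan is to argue by contradiction, converting a hypothetical period into an algebraic fixed-point equation and then breaking it with the normalization $a_i(0)\equiv 1\pmod p$. Suppose the tuple $(a_i(t))_{i\ge1}$ produced by the greedy recursion were eventually periodic: there exist $k_0\ge1$ and a least period $L\ge1$ with $a_{i+L}(t)=a_i(t)$ for all $i\ge k_0$. First I would reduce to a purely periodic tail. Put $b_l(t):=a_{k_0-1+l}(t)$ and let $g(t):=\big(\prod_{j=1}^{k_0-1}a_j(t)\big)\,R_{k_0-1}(t)$ be the normalized stage-$(k_0-1)$ residual (with $g:=f$ when $k_0=1$). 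By uniqueness of Engel expansions (Lemma~\ref{lem:engel-unique}), running the greedy recursion on $g$ returns exactly $(b_l(t))_{l\ge1}$, which is now purely periodic of period $L$, so $g(t)=\sum_{m\ge1}1/\prod_{l=1}^{m}b_l(t)$.

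Next I would collapse the periodicity into a single functional equation. Writing $B(t):=\prod_{l=1}^{L}b_l(t)$ and $T(t):=\sum_{r=1}^{L}1/\prod_{l=1}^{r}b_l(t)$ and regrouping the convergent series (Lemma~\ref{lem:engel-conv}) into blocks of length $L$ gives
\begin{align}
g(t)=T(t)+\frac{g(t)}{B(t)},\qquad R^{(g)}_{mL}(t)=\frac{g(t)}{B(t)^{m}}\quad(m\ge1),
\end{align}
where $R^{(g)}_k$ is the residual after $k$ steps of the greedy recursion on $g$, and the second identity follows from the first by iterating one period at a time. Thus, up to isomorphism, advancing one full period acts on residuals as multiplication by $1/B(t)$.

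The contradiction then comes from the arithmetic of $B(t)$. Each $b_l(t)\in\mathbb{Z}_p[[t]]$ satisfies $b_l(0)\equiv1\pmod p$, hence is a unit of $\mathbb{Z}_p[[t]]$ with $v_p(b_l)=0$ (for the $p$-adic Gauss valuation on $\mathbb{Q}_p((t))$); therefore $v_p(B)=0$, and the displayed identity forces $v_p\!\big(R^{(g)}_{mL}\big)=v_p(g)$ for every $m$. On the other hand, the greedy construction makes $v_p\!\big(R^{(g)}_{k}\big)$ strictly increasing in $k$ — this is exactly the valuation-growth mechanism underpinning Lemmas~\ref{lem:vp-growth}--\ref{lem:engel-conv} — so after a single period one already has $v_p\!\big(R^{(g)}_{L}\big)\ge v_p(g)+L$, which is incompatible with $v_p\!\big(R^{(g)}_{L}\big)=v_p(g)$ unless $L\le0$. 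Hence no nontrivial finite period can occur; in particular the Engel coefficients of the Hensel lift $x_P(t)=\zeta_3(1+pt)^{1/3}$ encoded by the KEM are aperiodic.

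The step I expect to be the genuine obstacle is the last one: making rigorous the claim that the per-period factor $B(t)$ is ``neutral'' for precisely the valuation with respect to which the greedy residuals strictly contract. Concretely, one must fix the valuation (or the pair $v_p,v_t$) on $\mathbb{Q}_p((t))$ that actually governs convergence of the recursion in Appendix~A, verify that the normalization $a_i(0)\equiv1\pmod p$ pins $B(t)$ to valuation $0$ for it, and dispose of the degenerate cases (finite expansions, and the shift $k_0>1$, which merely reindexes the tail). It is this normalization, and nothing else, that forbids the geometric-series fixed point $g=BT/(B-1)$ that periodicity would demand, so the argument stands or falls on carrying out that quantitative bookkeeping within the conventions of Appendices~A--B.
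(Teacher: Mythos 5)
Your argument is essentially the paper's own proof, made explicit: the paper argues in one line that a period would force the residuals $R_k$ to repeat up to $p$-adic unit factors (your $R^{(g)}_{mL}=g/B^{m}$ with $v_p(B)=0$), contradicting the strictly increasing valuation $v_p(R_{k+1})>v_p(R_k)$ enforced by the greedy choice. Your reduction to a purely periodic tail and the functional equation $g=T+g/B$ merely spell out that unit-factor repetition, so the approach matches; the caveat you raise about which valuation actually governs the contraction is a tension already present in the paper's Appendix~A conventions rather than a gap specific to your write-up.
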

\begin{proof}
If a period $r$ existed, then the residuals $R_k$ would repeat modulo $p$-adic units at period $r$, contradicting the strictly increasing valuation $v_p(R_{k+1})>v_p(R_k)$ enforced by the greedy choice. Thus no period exists.
\end{proof}
\begin{theorem}[Resistance to QFT-Style Attacks]\label{thm:qft}
Absent a hidden subgroup/period, the inversion problem does not reduce to quantum period finding (QFT). Hence known Shor/Kuperberg-style algorithms do not apply directly.
\end{theorem}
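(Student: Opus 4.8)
The plan is to make precise the (necessarily conditional) sense in which EEIP escapes the quantum period-finding toolbox, and then to derive that statement from Lemma~\ref{lem:aperiodic}. First I would fix the target of any putative reduction: every known QFT-based speedup --- Shor's factoring and discrete-log algorithms, and Kuperberg's dihedral sieve --- ultimately solves an instance of the Hidden Subgroup Problem, i.e.\ it is handed an efficiently evaluable function $F\colon G\to X$ on a finitely generated (abelian or dihedral) group $G$ that is constant on, and injective across, the cosets of an unknown subgroup $H\le G$, and returns generators of $H$. A reduction of EEIP to period finding therefore means: a quantum polynomial-time procedure that, from a truncation $f^{(M)}(t)$, builds such a pair $(G,F)$ whose hidden subgroup $H$ determines $a_1(t),\dots,a_k(t)$. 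The proof is then to show that no such $F$ can arise from the structure actually present in the Engel data.

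Second, I would enumerate the only candidate symmetries an EEIP instance carries: index shifts $(a_i)_{i\ge 1}\mapsto (a_{i+1})_{i\ge 1}$ on the coefficient stream, substitutions $t\mapsto ut$ or $t\mapsto t+c$ on the Laurent variable, and multiplicative rescalings of $f$. Each such operation furnishes a $\mathbb{Z}$- or $\mathbb{Z}_p$-action, and a reduction to period finding would need this action to hide a nontrivial period. Here Lemma~\ref{lem:aperiodic} does the work: the residuals satisfy $v_p(R_{k+1})>v_p(R_k)$, so the coefficient stream is not eventually periodic under the index shift, and the same valuation-growth obstruction from Lemma~\ref{lem:vp-growth} prevents the residual sequence from returning to a previous state modulo $p$-adic units under any of the remaining actions, since each of those either preserves valuations (the $t$-substitutions) or changes them by a fixed amount, and thus can never reproduce the strictly increasing valuation filtration $v_p(R_0)<v_p(R_1)<\cdots$. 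Consequently every efficiently computable hiding function assembled from these symmetries is injective on the relevant orbit, its hidden subgroup is trivial, and a QFT over $G$ returns no information about the $a_i(t)$.

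Third, I would package the dynamical picture as the conceptual core. The greedy Engel map $R_k\mapsto R_k-1/(a_1(t)\cdots a_{k+1}(t))$ with $a_{k+1}(t)=\lfloor 1/R_k(t)\rfloor_p$ is a valuation-expanding, non-injective endomorphism of $\mathbb{Q}_p((t))$ with no periodic orbits along the residual sequence, which is precisely the opposite of the periodic permutation $x\mapsto g^{x}\bmod N$ that makes Shor's algorithm succeed. Since the QFT diagonalizes exactly the translation operators of a group, and no such translation/period structure is present in the recursion, the spectral method has nothing to act on; I would state the theorem as: any attack that factors through an HSP instantiation built from instance-independent group actions on the Engel/Laurent representation fails, and therefore no polynomial-time Shor-type (nor subexponential Kuperberg-type) algorithm of that form solves EEIP. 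I would close with the honest caveat already flagged in the body: this rules out the standard HSP/QFT route but is not an unconditional hardness proof, since a future attack could exploit non-HSP structure of $\mathbb{Q}_p((t))$ or of the greedy recursion.

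The hard part will be formalizing ``reduces to period finding'' without either trivializing it --- any problem embeds in a sufficiently large HSP if the oracle may perform arbitrary computation --- or over-restricting it to the point of vacuity. My intended resolution is to quantify only over reductions whose hiding function is built from efficiently computable, instance-independent group actions on the Engel/Laurent data, which is exactly the template under which all known Shor- and Kuperberg-type applications fall, and then to prove via the two valuation lemmas that on that class the hidden subgroup is forced to be trivial. That isolates the claim to precisely what the theorem should assert while keeping the argument within reach of the already-established aperiodicity and valuation-growth results.
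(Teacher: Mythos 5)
Your proposal is correct and follows the same route as the paper: the entire argument pivots on Lemma~\ref{lem:aperiodic} (strictly increasing valuations $v_p(R_{k+1})>v_p(R_k)$ rule out any nontrivial period), from which one concludes that the hidden-subgroup/period structure required by Shor- or Kuperberg-style QFT algorithms is simply absent. The paper's own proof is just a two-sentence appeal to that lemma, so your extra machinery---formalizing ``reduces to period finding'' via instance-independent group actions and checking the shift, $t$-substitution, and rescaling symmetries---is a welcome sharpening of the same argument rather than a different one, and your closing caveat matches the conditional framing the paper itself adopts.
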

\begin{proof}
Follows from Lemma~\ref{lem:aperiodic}: QFT requires a periodic structure in an efficiently samplable function. The Engel recursion removes such periodicity by strictly increasing valuations.
\end{proof}

\section{Complexity Bounds}\label{app:complexity}
\begin{proposition}[Operation Counts]
With parameters $(M,d,\ell,\lambda)$ and message length $n$, \textsf{KeyGen} costs $\O(Md+\ell)$; \textsf{Encap}/\textsf{Decap} cost $\O(\ell+n)$; hashing is $\O(1)$ in $n$.
\end{proposition}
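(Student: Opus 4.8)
The statement is essentially a bookkeeping claim, so the plan is to fix a uniform cost model and then account, routine by routine, for the fixed-precision operations each algorithm performs. I would adopt the model implicit in Section~\ref{subsec:params}: one arithmetic operation on a $p$-adic limb (a $\lambda$-bit quantity) is charged unit cost, so $\lambda$ does not appear in the operation counts, only in the per-limb work. Under this model the three routines decompose into five kinds of work: (a) Engel construction/recompression; (b) V\'elu $2$-isogeny evaluation along the isogeny walk parameterized by $\ell$; (c) extraction of the curve coefficients and the $j$-invariant; (d) hashing $j(\cdot)$; and (e) the length-$n$ mask/XOR on the message.

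First I would bound \textsf{KeyGen}. Building the truncated Engel expansion $x_P^{(M)}(t)$ from the Hensel lift \eqref{eq:xt-series} runs the greedy recursion \eqref{eq:engel-greedy} to depth $M$; each step inverts the current residual and extracts a $p$-adic leading term, and by the valuation-growth bound (Lemma~\ref{lem:vp-growth}) together with \eqref{eq:engel-vp-growth} a fixed truncation of $t$-depth $d$ suffices at every step, so each of the $M$ steps is $\mathcal{O}(d)$ limb operations, giving $\mathcal{O}(Md)$. Applying the $\le 2$ chained $2$-isogenies amounts to evaluating the rational map \eqref{eq:velu-map} — a bounded number of differences, multiplications, and one inversion at fixed precision per step — so the full $\ell$-parameterized walk costs $\mathcal{O}(\ell)$, and forming $A_n(t),B_n(t)$ and then $j(E^{(n)})$ is $\mathcal{O}(1)$. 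Summing gives $\mathcal{O}(Md+\ell)$.

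Next I would treat \textsf{Encap} and \textsf{Decap} symmetrically. \textsf{Encap} samples $r$, walks from $E$ to $E^{(r)}$ and then to $E^{(n+r)}$ (each an $\mathcal{O}(\ell)$ V\'elu walk by the same per-step bound), hashes the fixed-size input $j(E^{(n+r)})$ in $\mathcal{O}(1)$ time, and XORs the mask with the $n$-symbol message at cost $\mathcal{O}(n)$, for a total of $\mathcal{O}(\ell+n)$; \textsf{Decap} receives $E^{(r)}$, performs the $\mathcal{O}(\ell)$ V\'elu walk to $E^{(n+r)}$, hashes in $\mathcal{O}(1)$, and unmasks in $\mathcal{O}(n)$, again $\mathcal{O}(\ell+n)$. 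Since $\Hash$ is applied only to $j$-invariants, each a bounded window of $t$-exponents with $\mathcal{O}(\lambda)$-bit coefficients and hence of size independent of $n$, its cost is $\mathcal{O}(1)$ in $n$, as claimed.

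The one point that is more than bookkeeping is justifying that each V\'elu step, and each greedy Engel step, genuinely costs $\mathcal{O}(1)$ (resp. $\mathcal{O}(d)$) \emph{uniformly} along the walk, i.e. that the required $p$-adic working precision does not grow with walk length. This is where I would lean on the valuation estimates \eqref{eq:engel-vp-growth} and \eqref{eq:xt-series}, which show a fixed truncation depth controls the residual error, and on the numerical stability of the differences $(x-\alpha)$ noted after \eqref{eq:velu-map}; with that in hand the remaining work is only to count the multiplications and inversions in \eqref{eq:velu-map} and the geometric-series inversion in the Engel step to pin down the hidden constants, which I would relegate to a short table rather than expand inline.
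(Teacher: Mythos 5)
Your proposal is correct and follows essentially the same accounting as the paper's proof: Engel generation touches $Md$ coefficient limbs, each V\'elu evaluation is $\mathcal{O}(1)$ at fixed precision summed over the $\ell$-degree structure, masking is linear in $n$, and hashing the $j$-invariant is $\mathcal{O}(1)$. Your added remarks on uniform working precision along the walk merely make explicit what the paper leaves implicit, so no further changes are needed.
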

\begin{proof}
Engel generation touches $Md$ coefficient limbs; each V\'{e}lu evaluation is $\O(1)$ per kernel element (here fixed-size subgroups), summed over $\ell$-degree structure. Masking is linear in $n$.
\end{proof}

\section{Implementation Notes and Side-Channel Considerations}\label{app:impl}
\begin{lemma}[Deterministic Timing Kernels]
Using fixed-precision $p$-adic limbs and removing data-dependent branches in the evaluation of $(x-\alpha)^{-1}$ makes the core V\'{e}lu map constant-time up to loop trip counts fixed by $(M,d)$.
\end{lemma}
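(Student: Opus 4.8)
The plan is to treat the lemma as a statement about the \emph{execution trace} of the V\'elu evaluation routine and to prove it by structural induction over a straight-line program built from a small closed set of branch-free, fixed-cost arithmetic primitives. First I would fix the leakage model: call a routine \textbf{constant-time} if the sequence of executed instructions and of memory addresses it touches --- hence the total cycle count --- is a function only of the public parameters $(M,d,\lambda,\ell)$ and of nothing secret (the counter $n$, the sampled $r$) or message-dependent. Under this definition it suffices to (i) exhibit a closed list of arithmetic primitives, each branch-free and with cost depending only on $(d,\lambda)$, and (ii) show that the map $\phi$ of \eqref{eq:velu-map}, its bounded composition, and the Engel recompression to depth $M$ are all fixed straight-line or bounded-loop programs over those primitives.

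Then I would catalogue the primitives. Fixed-precision addition, subtraction, and multiplication of elements of $\mathbb{Z}_p[[t]]$, truncated to a fixed window of $t$-exponents and $d$ limbs per coefficient, are schoolbook loops whose trip counts are set by $d$ and the window, with carry/borrow propagation written to run the full fixed length rather than until a carry vanishes; these are branch-free at cost $\mathcal{O}(d^2)$ limb operations. The only nontrivial primitive is inversion of a unit $u$ with $u(0)\equiv 1 \pmod p$: I would realize it by Hensel/Newton doubling, $w \leftarrow w(2-uw)$ started from $w=1$, iterated a number of times $\lceil\log_2 d\rceil+\lceil\log_2\lambda\rceil$ that is \emph{precomputed} from the target precision, so there is no convergence test and no data-dependent branch. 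Finally the leading-term extractor $\lfloor\cdot\rfloor_p$, which a naive implementation would code as a ``scan for the first nonzero limb'' loop --- the one genuinely data-dependent construct --- I would replace by a constant-time conditional select over the full fixed limb window using arithmetic masks, so that the same instructions run irrespective of the actual valuation.

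With this toolbox the assembly is routine: $\phi(x,y)=\bigl(x+c/(x-\alpha),\,y(1-c/(x-\alpha)^2)\bigr)$ with $c=3\alpha^2+A(t)$ decomposes into one subtraction $x-\alpha$, one unit inversion, one squaring, and a bounded number of multiplications and additions --- an $\mathcal{O}(1)$-length straight-line program, hence a fixed cycle count in $(d,\lambda)$. Chaining contributes an outer loop of length $n\le 2$ (resp.\ $r\le 2$); the public encoding contributes the Engel greedy recursion run for exactly $M$ steps, each step one inversion plus one masked leading-term extraction plus one subtraction, with every coefficient held to depth $d$; the isogeny-degree structure contributes the factor $\ell$ already recorded in Appendix~\ref{app:complexity}. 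I would invoke Lemma~\ref{lem:vp-growth} only to guarantee that $x-\alpha$ is a genuine unit of known valuation, so the inversion step is always in range and no exceptional ``division by zero'' branch is ever reached.

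The hard part is not the composition but pinning down exactly the two places where control flow could leak --- the first-nonzero scan inside $\lfloor\cdot\rfloor_p$ and any early exit in the inversion or carry loops --- and verifying that the masked, fixed-length replacements compute the same values; once that is done, constant-timeness of the whole V\'elu map and of the KEM kernels follows by induction on program structure, with residual dependence confined, as claimed, to the loop trip counts $M$ and $d$ (and $\ell$), which are public by design. A secondary subtlety worth flagging is that ``constant-time'' here is relative to an idealized machine: on the ESP32 the data-independent cycle counts hold for the arithmetic itself, but cache-line and bus effects (the $32$-byte alignment artefacts visible in the timing plots) lie outside this model and are addressed by implementation hygiene rather than by the lemma.
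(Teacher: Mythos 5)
Your proposal is correct and follows essentially the same route as the paper, whose proof is simply a terse appeal to standard constant-time inversion and multiplication with fixed limb widths, leaving residual variability only in the public trip counts $(M,d)$. You merely fill in the details the paper leaves implicit --- the fixed-iteration Newton/Hensel inversion, the masked replacement of the first-nonzero scan in $\lfloor\cdot\rfloor_p$, and the straight-line decomposition of the V\'elu map --- so the argument matches, just at a finer level of rigor.
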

\begin{proof}
Standard constant-time inversion/multiplication with fixed limb widths yields instruction traces independent of secret $n$ and random $r$. The remaining variability depends only on public $(M,d)$.
\end{proof}
Nonlinear dependence of $A_n(t),B_n(t)$ on the Engel-encoded $x_P^{(M)}(t)$ injects high min-entropy into side-channel observables, complicating correlation attacks; masking is still recommended.

\end{document}